\let\@@pmod\pmod
\DeclareRobustCommand{\pmod}{\@ifstar\@pmods\@@pmod}
\def\@pmods#1{\mkern4mu({\operator@font mod}\mkern 6mu#1)}
\newtheoremstyle{mystyle}
  {}
  {}
  {\itshape}
  {}
  {\bfseries}
  {.}
  { }
  {}
\theoremstyle{mystyle}
\newtheorem{definition}{Definition}
\newtheorem{theorem}{Theorem}
\newtheorem{lemma}{Lemma}
\newcommand{\appropto}{\mathrel{\vcenter{
  \offinterlineskip\halign{\hfil$##$\cr
    \propto\cr\noalign{\kern2pt}\sim\cr\noalign{\kern-2pt}}}}}
\def\BibTeX{{\rm B\kern-.05em{\sc i\kern-.025em b}\kern-.08em T\kern-.1667em\lower.7ex\hbox{E}\kern-.125emX}}
\begin{document}
\title{Venc Design and Velocity Estimation for Phase Contrast MRI}
\author{Shen Zhao,  Rizwan Ahmad, and Lee  C. Potter, \IEEEmembership{Senior Member, IEEE}
\thanks{This manuscript was submitted on September 10, 2021 and revised on February 17, 2022, May 6, 2022, and July 10, 2022. This work was supported in part by NIH grants R01HL135489 and R01HL151697. }
\thanks{S.\ Zhao and L.\ Potter are with the Department of Electrical \& Computer Engineering, Ohio State University, Columbus, OH 43210 USA (email: zhao.1758@osu.edu, potter.36@osu.edu).}
\thanks{R.\ Ahmad is with the Department of Biomedical Engineering, Ohio State University, Columbus, OH 43210 USA (email: ahmad.46@osu.edu).}}
\maketitle
\begin{abstract}
In phase-contrast magnetic resonance imaging (PC-MRI), spin velocity contributes to the phase measured at each voxel. Therefore, estimating velocity from potentially wrapped phase measurements is the task of solving a system of noisy congruence equations. We propose \emph{Phase Recovery from Multiple Wrapped Measurements} (PRoM) as a fast, approximate maximum likelihood estimator of velocity from multi-coil data with possible amplitude attenuation due to dephasing. The estimator can recover the fullest possible extent of unambiguous velocities, which can greatly exceed 
twice the highest venc. The estimator uses all pairwise phase differences and the inherent correlations among them to minimize the estimation error. Correlations are directly estimated from multi-coil data without requiring knowledge of coil sensitivity maps, dephasing factors, or the actual per-voxel signal-to-noise ratio. Derivation of the estimator yields explicit probabilities of unwrapping errors and the probability distribution for the velocity estimate; this, in turn, allows for optimized design of the phase-encoded acquisition. These probabilities are also incorporated into spatial post-processing to further mitigate wrapping errors. Simulation, phantom, and in vivo results for three-point PC-MRI acquisitions validate the benefits of reduced estimation error, increased recovered velocity range, optimized acquisition, and fast computation. A phantom study at 1.5\,T demonstrates 48.5\% decrease in root mean squared error using PRoM with post-processing versus a conventional ``dual-venc” technique. Simulation and 3\,T in vivo results likewise demonstrate the proposed benefits.
\end{abstract}

\begin{IEEEkeywords}
Phase-contrast MRI; dual-venc; phase unwrapping; congruence equations;
sparse array design
\end{IEEEkeywords}

\section{Introduction}
\label{sec:introduction}
\IEEEPARstart{P}{hase}-contrast magnetic resonance imaging (PC-MRI) is a quantitative, non-invasive technique to measure hemodynamics in vivo \cite{pelc1991phase}. PC-MRI also enables higher dimensional velocimetry such as 4D flow imaging \cite{markl20124d}. Spin velocity is encoded into the voxel phase via a time-varying gradient field.
Lowering the first moment of the encoding gradient extends the unaliased range but degrades the velocity-to-noise ratio (VNR). To address this issue, multi-point acquisitions such as ``dual-venc'' have been proposed, whereby a high-venc measurement is used to unwrap a potentially wrapped, but less noisy, low-venc velocity measurement \cite{Lee1995, Schnell2017}. 
In contrast, multiple (possibly wrapped) pairwise phase differences can be jointly processed \cite{Zhao2018, loecher2018velocity, Carrillo2019}, leading to a potentially larger unambiguous range of velocities and a reduced estimation error. Existing estimators do not guide the optimized design of the acquisition and employ a computationally expensive grid search or gradient descent iteration.

Building on our preliminary work \cite{Zhao2018}, we propose \emph{Phase Recovery from Multiple Wrapped Measurements} (PRoM), an approximate maximum likelihood estimator (MLE) for a set of linear congruence equations with additive correlated noise. The estimator is used in multi-point PC-MRI to process all pairwise phase differences jointly. The PRoM estimator first constructs a set of candidate tuples of wrapping integers. The probability that the true tuple of wrapping integers is in this set is arbitrarily close to $1$. For each candidate tuple, the corresponding candidate velocity is found without grid searching as a simple weighted combination of the noisy measurements plus the candidate wrapping. The final velocity estimate is chosen among the small set of candidate velocities to maximize the likelihood function. The approximate MLE explicitly accommodates both intra-voxel dephasing and the inherent noise correlation present among phase differences. The proposed estimator does not need the complex-valued coil sensitivities, the actual dephasing factors, or the actual per-voxel SNR. Instead, the computation of the phase difference error covariance automatically accounts for these effects directly from the complex-valued data from the set of phase-encoded images across all coils and all encodings.

The probability distribution of the velocity estimate from noisy data is derived, allowing for the optimized phase encoding design. The likelihoods of wrapping integers provided by the proposed estimator are leveraged in spatial post-processing to mitigate unwrapping errors. Additionally, the same estimation problem appears in other array processing applications, where PRoM extends current art \cite{wang2015maximum, WangNehorai2017} to provide: a fast, grid-free estimator; accommodation of correlated noise; and principled design of sparse array geometry. 

For validation, PRoM is applied to data sets acquired from simulation, 1.5\,T scan of a spinning phantom, and 3\,T in vivo scan of a healthy volunteer.

\section{Theory}
\label{sec:theory}

\subsection{Phase Encoding}
A time-varying magnetic gradient field may be used to encode spin velocity into image phase. Here, we consider encoding the velocity component in one direction. Consider a spin moving through a magnetic field, the Taylor series expansion of spin position $p(t) \in \mathbb{R}$ at $t=0$ yields
\begin{equation}
    p(t) = p(0) + \frac{v}{1!} t + \frac{a}{2!} t^2 + \cdots,
\end{equation}
where $v$ is instantaneous velocity and $a$ is acceleration. Let $\gamma$ be the gyromagnetic ratio, $B_0 $ be the main static magnetic field strength, and $g(t)\in \mathbb{R}$ be the time-varying magnetic field gradient. Then to first order approximation of $p(t)$ \cite{pelc1991phase}, the phase accumulated from $t=0$ to echo time TE is 
\begin{align}
    \phi    &= \int_0^\text{TE}  \gamma \left [ B_0 + g(t)p(t) \right] d t             \nonumber\\
            &\approx \overbrace{\gamma B_0  \text{TE} + \gamma p(0) \underbrace{\int_{0}^{\text{TE}} g(t) d t}_{m_0}}^{\phi_0} + \gamma v \underbrace{ \int_{0}^{\text{TE}} t g(t)  d t}_{m_1} \nonumber\\
            & = \phi_0 + \gamma m_1 v,
\end{align}
where $m_0$ and $m_1$ denote the zeroth and first moments of $g(t)$. The zeroth moment, $m_0$, encodes the spin position into phase and combines with $\gamma B_0  \text{TE}$ to make the background phase, $\phi_0$. The first moment, $m_1$, encodes spin velocity into phase.
So for $N_e$-point encoding and $N_c$ coils, the integral of all spins in a voxel yields the measurement
\begin{align}
    \label{eq: datamodel}
    \widetilde{y}_{\alpha \beta} = A_\alpha S_\beta e^{i (\phi_0 + \gamma m_{1\alpha} v )} + \delta_{\alpha \beta},
\end{align}
where $\alpha \in \{1, ..., N_e\}$ indexes over all encodings, and $\beta\in \{1, ..., N_c\}$ indexes over all coils.  The resulting signal amplitude is $A_\alpha \in \mathbb{R}$; $S_\beta \in \mathbb{C}$ is the coil sensitivity; $v$ is the resultant instantaneous velocity; $m_{1\alpha} \in \mathbb{R}$ is the first moment; $\delta_{\alpha \beta}\in \mathbb{C}$ is independent and identically distributed (i.i.d.) complex circularly symmetric Gaussian noise. This i.i.d. assumption can be aided by pre-whitening along the coil dimension. The existence of heterogeneous spin velocities and proton density can make $v$ in \eqref{eq: datamodel} differ from the mean moving spin velocity (e.g., partial volume effect) and can reduce the amplitude (e.g., intra-voxel dephasing effect). Generally, $A_\alpha$ decreases as $|m_{1\alpha}|$ increases \cite{o2008mri}.

Let $\widetilde{\bm{Y}}$ denote an ${N_e\times N_c}$
complex-valued measurement matrix with $(\alpha,\beta)^\text{th}$ entry  $\widetilde{y}_{\alpha \beta}$ for encoding $\alpha$ and coil $\beta$.  Observe that the unambiguous range of velocities for the $N_e$ encodings is
\begin{equation}
    \label{eq: Range Omega Prime}
    \Omega' = \text{LCM}\left(\frac{2\pi}{\gamma m_{11}}, \cdots, \frac{2\pi}{\gamma m_{1N_e}} \right),
\end{equation}
where $\text{LCM}(\cdot)$ denotes least common multiple, which is the smallest positive real number that is an integer multiple of all input numbers. 
It follows that $v$ and $v+ \Omega'$ are indistinguishable given data, $\widetilde{\bm{Y}}$. Considering noise, the MLE of $v$ involves $N_e+2N_c+2$ real-valued unknowns, $\left\{ v, \phi_0, A_1, \cdots, A_{N_e},|S_1|,\angle S_1, \cdots,|S_{N_c}|, \angle S_{N_c} \right\}$, and is a nonlinear least-squares fit to the data. Here, $\angle (\cdot)$ denotes angle of a complex number. The optimization task can be reduced to 
\begin{align}
    \label{eq:  MLE}
    \underset{\bm{s}}{\operatorname{argmax}}~  \frac{\bm{s}^\mathsf{H} \widetilde{\bm{R}} \bm{s}}{\bm{s}^\textsf{H}\bm{s}}~ \operatorname{s.t.} \|\bm{s}\| = 1,
\end{align}
where $\widetilde{\bm{R}}\in \mathbb{C}^{N_e \times N_e}$ and the ``steering vector'' $\bm{s}\in\mathbb{C}^{N_e\times 1}$ are
\begin{align}
    \widetilde{\bm{R}} &= \widetilde{\bm{Y}} \widetilde{\bm{Y}}^\mathsf{H} \nonumber\\
    \bm{s} &= \left[ A_1 e^{i  \gamma m_{11} v}, ... , A_{N_e} e^{i  \gamma m_{1N_e} v} \right]^{\intercal}.
\end{align}
Derivation of \eqref{eq:  MLE} is a straightforward extension of \cite[p.~288]{StoicaMoses} to accommodate
unequal amplitudes, $A_\alpha$.
Here, $(\cdot)^{\intercal}$ and $(\cdot)^\mathsf{H}$ denote transpose and conjugate transpose. 
For a given $v$, the amplitudes $ A_\alpha$ may be found by solving \eqref{eq:  MLE} as a generalized eigenvalue problem \cite[p.\,461]{van1996matrix}; yet, the optimization over $v$ nonetheless encounters a difficult cost surface with many local minima. A simple illustration of the sum of squared error versus velocity using a single-coil, symmetric three-point acquisition is shown in Fig.~\ref{fig: sum of squared error}. Data are simulated using $\left[\frac{|A_1S_1|}{\sigma(\delta_{11})}, \frac{|A_2S_1|}{\sigma(\delta_{21})},\frac{|A_3S_1|}{\sigma(\delta_{31})}\right] = [2.5,5,2.5]$ to represent a moderate 50\% loss of signal amplitude due to intra-voxel dephasing. Throughout the paper, we use same 50\% in simulation. The fit error at the true velocity, $v= 0$\,cm/s, is shown by the red diamond, and the global minimum at $v \approx -1.01$\,cm/s for the given noise realization is shown by the green marker. Competing local minima, shown as blue markers, may become the global minimum due to noise, thereby causing unwrapping errors. 
\begin{figure}[h!]
    \centering
    \includegraphics[width = \columnwidth]{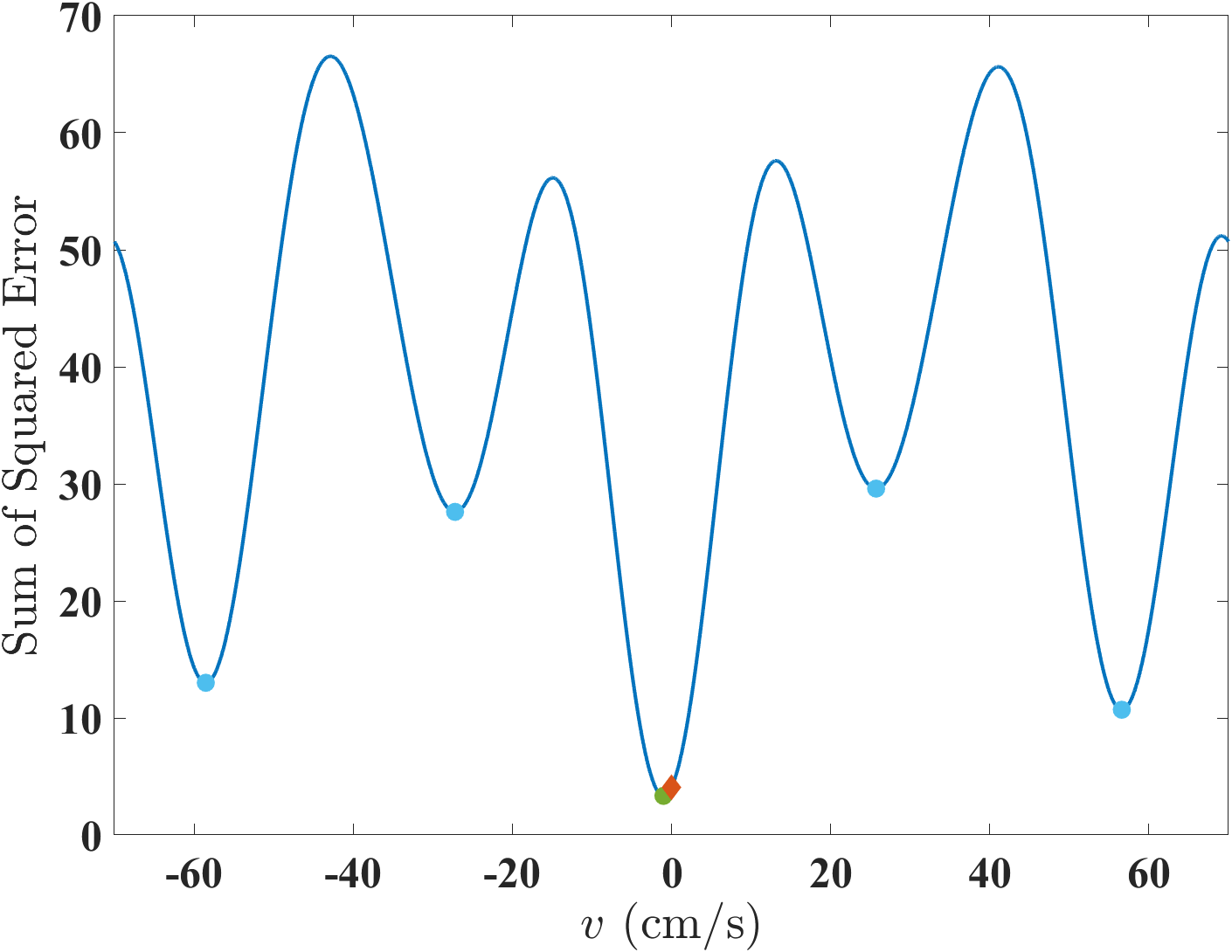}
    \caption{Sum of squared error versus velocity for a single coil, symmetric three-point acquisition. $\gamma[m_{11}, m_{12}, m_{13}] = \left[-\frac{\pi}{20},\frac{\pi}{70},\frac{\pi}{20}\right]$\,s/cm,  $\left[\frac{|A_1S_1|}{\sigma(\delta_{11})}, \frac{|A_2S_1|}{\sigma(\delta_{21})},\frac{|A_3S_1|}{\sigma(\delta_{31})}\right] = [2.5,5,2.5]$. The fit error to the noisy data at the true velocity $v=0$\,cm/s is shown by the red diamond. The global minimum at $v \approx -1.01$\,cm/s for the given noise realization is shown by the green marker. Competing local minima, shown as light blue markers, may become the global minimum due to noise, thereby causing unwrapping errors.}
    \label{fig: sum of squared error}
\end{figure}

\subsection{Estimation of Phase Noise Covariance}
\label{subsection: Estimation of Phase Noise Covariance}
From \eqref{eq:  MLE} we see that $\widetilde{\bm{R}}$ is a sufficient statistic for $v$ in \eqref{eq: datamodel}.
We depart from the multi-variate optimization in \eqref{eq:  MLE} to instead work with the phases of the off-diagonal entries of $\widetilde{\bm{R}}$, and we use amplitudes of $\widetilde{\bm{Y}}$ to construct the phase difference noise covariance matrix.
In so doing, we gain four advantages:
(1) characterization of the unambiguous set of estimated velocities;
(2) characterization of the probability of unwrapping errors;
(3) ability to design the encodings $\left[ m_{11}, ..., m_{1N_e} \right]$
to minimize mean squared estimation error subject to guarantees on the probability of unwrapping error and required unambiguous range; 
(4) fast, grid-free parameter estimator for velocity, $v$.
Further, the proposed surrogate estimator is asymptotically efficient, providing a good MLE approximation. In this section, we derive an approximate covariance matrix for the phase measurements in $\widetilde{\bm{R}}$ to lay the groundwork for building the proposed estimator of $v$.

Observe that $\widetilde{\bm{R}} = \widetilde{\bm{Y}} \widetilde{\bm{Y}}^\mathsf{H}$ performs coil combining and phase differencing. Denote the $(a,b)^\text{th}$ entry of $\widetilde{\bm{R}}$ as $\widetilde{r}_{ab}$, so
\begin{equation}
    \widetilde{r}_{ab} = \sum_\beta\widetilde{y}_{a\beta} \widetilde{y}_{b\beta}^*,   
\end{equation}
where $(\cdot)^*$ denotes complex conjugation. The noisy phase difference $\widetilde{\theta}_{ab}$ for encodings $a > b \in \{ 1,\cdots, N_e\}$ is 
\begin{equation}
    \label{eq: phasediff}
    \widetilde{\theta}_{ab} = \angle \widetilde{r}_{ab}.
\end{equation}
This phase differencing results in venc value
\begin{align}
    \label{eq: vencs}
    \text{venc}_{ab} = \frac{\pi}{\gamma ( m_{1a} - m_{1b} )}.
\end{align}
We have $\frac{N_e(N_e-1)}{2}$ such combinations. Throughout the paper, 
we let $\text{venc}_{ab}$ have units cm/s. Multiplying the vencs with phases, we obtain a (possibly wrapped) noisy velocity $\widetilde{v}_{ab}$
\begin{equation}
    \label{eq: remainder notation}
    \widetilde{v}_{ab} = \frac{\widetilde{\theta}_{ab}}{\pi}\text{venc}_{ab}.
\end{equation}
Phases $\widetilde{\theta}_{ab}$ are unambiguous on any interval of length $2\pi$, and for convenience we use $[0,2\pi)$; so, $\widetilde{v}_{ab} \in [0, 2\text{venc}_{ab})$. Thus, we have a set of noisy congruence equations for $v$:
\begin{equation}
    \label{eq: congruence}
    \widetilde{v}_{ab} \equiv v + n_{ab} \mod 2\text{venc}_{ab},
\end{equation}
where $n_{ab}$ is additive zero mean noise. Define $k_{ab}$ as the wrapping integer for $\widetilde{v}_{ab}$. Momentarily assume that the true wrapping integer, $k_{ab}$, is known; then, we can rewrite \eqref{eq: congruence} as a set of linear equations:
\begin{equation}
    \label{eq: equality formulation}
    \widetilde{\bm{v}} + \bm{k}\circ2\textbf{venc} = v +\bm{n},
\end{equation}
where $\circ$ is the Hadamard (element-wise) product, and 
\begin{align}
    \widetilde{\bm{v}}  &= \left[ \widetilde{v}_{21},\cdots, \widetilde{v}_{N_e(N_e-1)}  \right]^\intercal \nonumber\\
    \bm{k}              &= \left[k_{21},             \cdots, k_{N_e(N_e-1)}              \right]^\intercal \nonumber\\
    \textbf{venc}       &= \left[ \text{venc}_{21},  \cdots, \text{venc}_{N_e(N_e-1)}    \right]^\intercal \nonumber\\
    \bm{n}              &= \left[ n_{21},            \cdots, n_{N_e(N_e-1)}              \right]^\intercal. 
    \label{eq: definition}
\end{align}
From the Chinese remainder theorem, the smallest $\Omega > 0$ such that $v+\Omega$ satisfies \eqref{eq: congruence} is
\begin{equation}
    \label{eq: Range Omega}
    \Omega = \text{LCM}(2\textbf{venc}).
\end{equation}
This also implies that $\Omega$ is the smallest repetition period of $v$ to construct the same $\widetilde{\bm{R}}$. Recall from \eqref{eq: Range Omega Prime} that $\Omega'$ is a repetition period of $v$ to construct the same $\widetilde{\bm{Y}}$, as well as the same $\widetilde{\bm{R}}$. So, $\Omega'$ is an integer multiple of $\Omega$.

Similar to the assumption $\widetilde{\theta}_{ab} \in [0, 2\pi)$, we assume $v \in [0,\Omega)$ for convenience of derivation. This interval could also be shifted to $\left[-\frac{\Omega}{2},\frac{\Omega}{2}\right)$, for example, for bidirectional flow in MRI.

Let $\widehat{v} \in [0,\Omega)$ be the linear unbiased estimator of $v$ with smallest root mean squared error (RMSE). To compute $\widehat{v}$ from the noisy $\widetilde{\bm{v}}$ in \eqref{eq: equality formulation}, we only need the covariance matrix of the noise in the remainders,  $\bm{\Sigma}\left(\bm{n}\right)$. Define $\sigma$ to be the standard deviation of the i.i.d. noise $\delta_{\alpha \beta}$ in \eqref{eq: datamodel}. The mean and variance of $\widetilde{r}_{ab}$ are \cite{ODonoughue2012}:
\begin{align}
	\label{eq: prodmean}
	\mathbb{E}\left(\widetilde{r}_{ab}\right) &=  \sum_\beta A_a A_b  e^{i\gamma (m_{1a}-m_{1b}) v} |S_\beta|^2\\
	\label{eq: prodvar}
	\sigma^2 \left(\widetilde{r}_{ab}\right) &= \sum_\beta\left (\sigma^2 (A_a^2 + A_b^2)|S_\beta|^2 + \sigma^4\right).
\end{align}
Then from \eqref{eq: prodmean}-\eqref{eq: prodvar}, the squared signal-to-noise ratio (SNR), or Rician factor, for $\widetilde{r}_{ab}$ is
\begin{align}
    \operatorname{SNR}^2 \left(\widetilde{r}_{ab}\right) \nonumber =& \frac{\left|\mathbb{E}\left(\widetilde{r}_{ab}\right)\right|^2}{\sigma^2\left(\widetilde{r}_{ab}\right)} \nonumber\\
    =& \frac{\left(\sum_\beta A_a A_b|S_\beta|^2\right)^2}{\sum_\beta\left(\sigma^2 (A_a^2 + A_b^2 )|S_\beta|^2 + \sigma^4\right) } \nonumber\\
    \label{eq:  squared SNR for sum of conjugate product}
    =& \frac{\left(\sum_\beta s_{a\beta} s_{b\beta}\right)^2}{\sum_\beta(s_{a\beta}^2 + s_{b\beta}^2 + 1)},
\end{align}
where $s_{\alpha \beta} = \frac{|A_\alpha S_\beta|}{\sigma} = \operatorname{SNR}\left(\widetilde{y}_{\alpha \beta}\right)$. As $\operatorname{SNR} \left(\widetilde{r}_{ab}\right) \rightarrow \infty$,  $\widetilde{\theta}_{ab}$ converges in distribution to a Gaussian random variable. On the contrary, as $\operatorname{SNR} \left(\widetilde{r}_{ab}\right) \rightarrow 0$,  $\widetilde{\theta}_{ab}$ converges in distribution to a uniformly distributed random variable on $[0, 2\pi)$. Because $\lim_{x\rightarrow 0} \frac{x}{\sin(x)} = 1$, and $\sigma \left(\widetilde{\theta}_{ab}\right)$ conditioned on no wrapping is inversely proportional to $\operatorname{SNR} \left(\widetilde{r}_{ab}\right)$, 
\begin{equation}
    \label{eq: varapprox}
    \sigma^2\left( \widetilde{\theta}_{ab} \right) 
    \approx \frac{1}{2\operatorname{SNR}^2 \left(\widetilde{r}_{ab}\right)} 
    = \frac{\sum_\beta(s_{a\beta}^2 + s_{b\beta}^2 + 1)}{2\left(\sum_\beta s_{a\beta} s_{b\beta}\right)^2}.
\end{equation}
Similarly, we can obtain covariance given no wrapping
\begin{align}
    \operatorname{cov} \left( \widetilde{\theta}_{ab}, \widetilde{\theta}_{cb}  \right) 
    &\approx \frac{\left|\operatorname{cov}\left( \widetilde{r}_{ab},\widetilde{r}_{cb} \right)\right|}{\sum_\beta 2A_aA_b^2A_c |S_\beta|^4} 
    = \frac{N_c}{\sum_\beta 2s_{b\beta}^2} \nonumber\\\label{eq: noshared}
    \operatorname{cov} \left( \widetilde{\theta}_{ab}, \widetilde{\theta}_{bd}  \right)
    & \approx \frac{-N_c}{\sum_\beta 2s_{b\beta}^2}  
\end{align}
and $\operatorname{cov} \left( \widetilde{\theta}_{ab}, \widetilde{\theta}_{cd}  \right)=0$
for two phase differences that do not share a common encoding.

In practice, it may be difficult to accurately estimate the noise power for each voxel, and thus hard to estimate $s_{\alpha \beta}$. To ameliorate estimation difficulty and use complex measurements $\widetilde{y}_{\alpha \beta}$ only, we further approximate and scale the variance and covariance:
\begin{align}
    \label{eq: approximated scaled sigma}
    \sigma^2\left( \widetilde{\theta}_{ab} \right) 
    &\approx \frac{\sum_\beta (s_{a\beta}^2+s_{b\beta}^2)}{2\left(\sum_\beta s_{a\beta}s_{b\beta}\right)^2} 
    \appropto  \frac{\sum_\beta \left(|\widetilde{y}_{a\beta}|^2+|\widetilde{y}_{b\beta}|^2\right)}{2\left(\sum_\beta |\widetilde{y}_{a\beta}||\widetilde{y}_{b\beta}|\right)^2}\\
    \label{eq: approximated scaled cov}
    \operatorname{cov} \left( \widetilde{\theta}_{ab}, \widetilde{\theta}_{cb}  \right) & \appropto \frac{N_c}{\sum_\beta 2|\widetilde{y}_{b\beta}|^2},
\end{align}
where $\appropto$ denotes ``approximately proportional to.''  Here, the scaling factors are the same for \eqref{eq: approximated scaled sigma} and \eqref{eq: approximated scaled cov}. Let 
\begin{equation}
    \widetilde{\bm{\theta}}  = \left[ \widetilde{\theta}_{21},\cdots, \widetilde{\theta}_{N_e(N_e-1)}  \right]^\intercal \nonumber.
\end{equation}
Then, with the elementwise approximations above, we can formulate the scaled approximated $\bm{\Sigma}\left(\widetilde{\bm{\theta}}\right)$ directly from observed voxel magnitudes. This scaling in \eqref{eq: approximated scaled sigma} and \eqref{eq: approximated scaled cov} does not affect the estimator $\widehat{v}$, as seen from \eqref{eq: cost} below. Finally, because the true covariance matrix is close to rank deficient, the element-wise approximations in \eqref{eq: approximated scaled sigma} and \eqref{eq: approximated scaled cov} can potentially violate the positive semi-definite property of a covariance matrix. Accordingly, we follow the approximation step by projection to the closest positive semi-definite matrix \cite{halmos1972positive}. This projection operator, $\Pi(\cdot)$, for a symmetric matrix $\mathbf{M}$ with eigen-decomposition $\mathbf{M} = \mathbf{V}\mathbf{S} \mathbf{V}^{-1}$ is given by
\begin{equation}
    \Pi(\mathbf{M}) = \mathbf{V} \max(\mathbf{S},0)\mathbf{V}^{-1},
\end{equation}
where $\max(\mathbf{S},0)$ is applied element-wise to the eigenvalues. 

To illustrate accuracy of covariance modeling, we adopt the cosine similarity metric, which is scale invariant. For modeled covariance $\Pi\left(\bm{\Sigma}\left(\widetilde{\bm{\theta}}\right)\right)$ and sample covariance $\widetilde{\bm{\Sigma}}\left(\bm{\theta}\right)$, the cosine similarity is 
\begin{equation}
    \frac{\text{Trace} \left( \Pi\left(\bm{\Sigma}\left(\widetilde{\bm{\theta}}\right)\right)^\mathsf{H}\widetilde{\bm{\Sigma}}\left(\bm{\theta}\right)\right)}
    {\left\|\Pi\left(\bm{\Sigma}\left(\widetilde{\bm{\theta}}\right)\right)\right\|_\mathsf{F} \left\|\widetilde{\bm{\Sigma}}\left(\bm{\theta}\right)\right\|_\mathsf{F}}.
\end{equation}
Results are computed for the case of a single-coil, symmetric encoding, and intra-voxel dephasing $2s_{11} = s_{21} = 2s_{31}$; $10^6$ random draws are used at each $s_{21}$ to provide a sample covariance matrix
and to compute the mean cosine similarity.
Fig.~\ref{fig: verify approximation} displays the similarity metric results,
from which we see excellent agreement for $s_{21} > 2$.  Thus, the covariance model is accurate at modest SNR.  Also shown in Fig.~\ref{fig: verify approximation} is the similarity metric for the (scaled) identity covariance model, which is implicitly employed when using least-squares estimation with phase differences \cite{Carrillo2019}.

\begin{figure}[h!]
    \centering
    \includegraphics[width = \columnwidth]{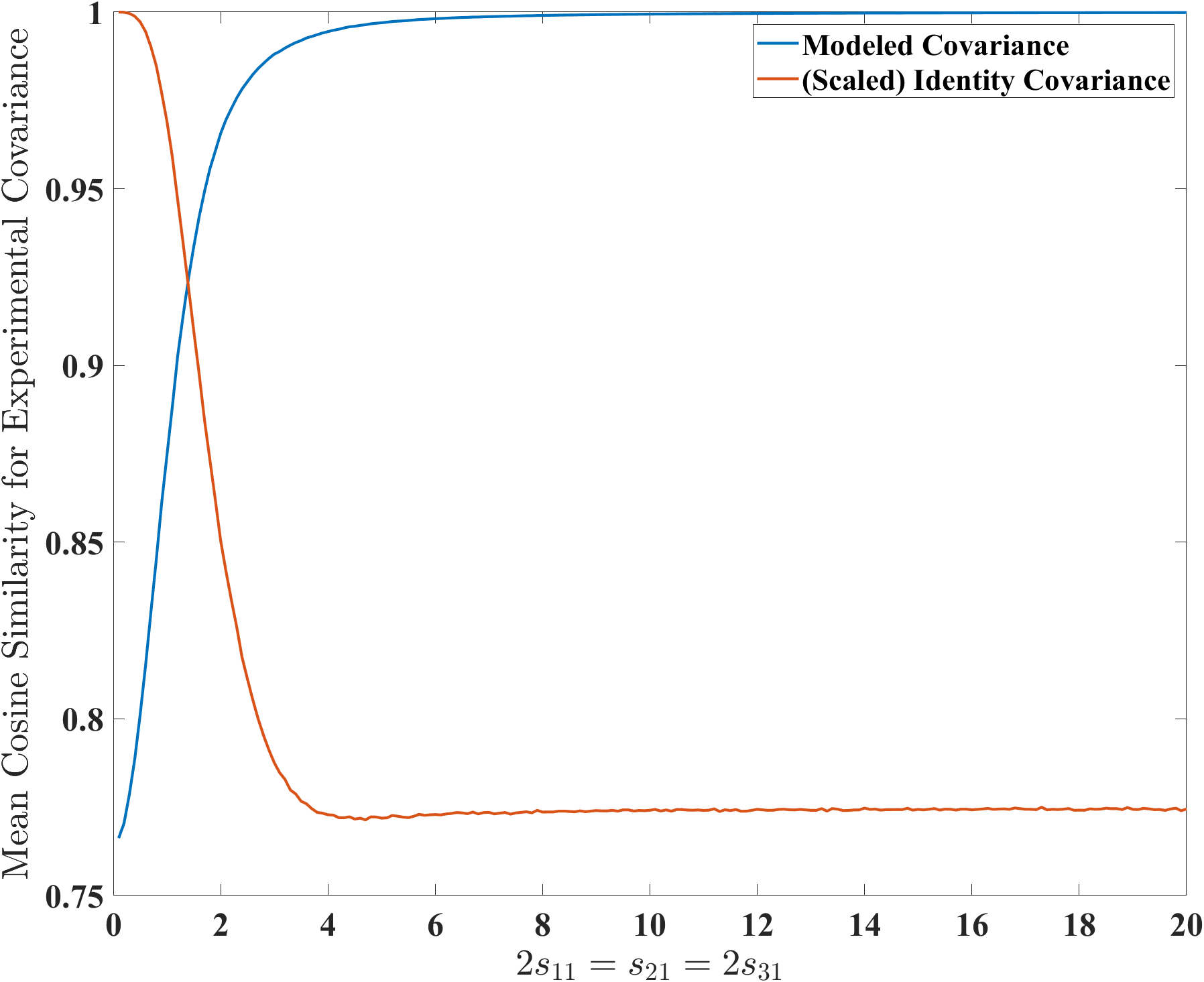}
    \caption{Mean cosine similarity for the modeled covariance matrix versus experimental covariance using $10^6$ trials at each $s_{21}$ value for single coil and $2s_{11} = s_{21} = 2s_{31}$.}
    \label{fig: verify approximation}
\end{figure}

From \eqref{eq: remainder notation}, the wrapped velocity measurements are linearly related to the phase differences; thus, the approximated scaled positive semi-definite covariance matrix for the noise, $\bm{n}$, in \eqref{eq: equality formulation} conditioned on no wrapping is
\begin{align}
    \label{eq: general covariance}
    \bm{\Sigma}\left(\bm{n}\right) \appropto  
    \frac{1}{\pi^2} \operatorname{diag} \left(\textbf{venc} \right) 
    \Pi \left(\bm{\Sigma} \left(\widetilde{\bm{\theta}}\right) \right)\operatorname{diag} \left( \textbf{venc} \right).
\end{align}

\subsection{Best Linear Unbiased Estimator}
\label{subsection: Linear Unbiased Estimator}
Based on the covariance derivation in \ref{subsection: Estimation of Phase Noise Covariance}, we can now formulate the estimator. 
We adopt two approximations suitable when the SNR is not extremely low.
Our first approximation is that $\bm{n}$ follows a joint Gaussian distribution with covariance matrix given in \eqref{eq: general covariance},
\begin{equation}
    \label{eq: assumption 1}
    \bm{n} \sim \mathcal{N} (0, \bm{\Sigma}(\bm{n})).
\end{equation}
Denote the velocity estimate $\widehat{v}$ given wrapping integers $\bm{k}$ as $\widehat{v}_{\bm{k}}$. Then, due to \eqref{eq: assumption 1}, $\widehat{v}_{\bm{k}}$ and its resulting RMSE given no wrapping, $\sigma(\widehat{v}_{\bm{k}})$, are 
\begin{align}
    \label{eq: weight combination}
    \widehat{v}_{\bm{k}}        &= \langle \bm{w}^\intercal (\widetilde{\bm{v}}+\bm{k}\circ 2\textbf{venc})\rangle_\Omega\\
    \label{eq: rmse given k}
    \sigma(\widehat{v}_{\bm{k}})&= \bm{w}^\intercal \bm{\Sigma}\left(\bm{n}\right) \bm{w},
\end{align}
where 
\begin{equation}
    \label{eq: w}
    \bm{w}^\intercal = \frac{\bm{1}^\intercal \bm{\Sigma}^{-1}\left(\bm{n}\right)}{\bm{1}^\intercal \bm{\Sigma}^{-1}\left(\bm{n}\right)\bm{1}},
\end{equation}
and $\langle \bm{a} \rangle_{\bm{b}}$ denotes remainders after elementwise modulo $\bm{a}$ by $\bm{b}$.
Thus, the estimate $\widehat{v}_{\bm{k}}$ is a weighted sum of unwrapped noisy velocities. 

The second assumption we adopt is 
\begin{equation}
    \label{eq: assumption 2}
    \mathbb{P} ( -\textbf{venc} \preccurlyeq \bm{n} \preccurlyeq \textbf{venc}) \approx 1,
\end{equation}
where $\preccurlyeq$ is element-wise less than or equal. This approximation 
yields the 
likelihood $\mathbb{P}(\widetilde{\bm{v}}|v)$
\begin{align}
    \label{eq: cost}
    \mathbb{P}(\widetilde{\bm{v}}|v)      &\approx \frac{e^{-\frac{1}{2}\operatorname{d}_{2\textbf{venc}}^\intercal(\widetilde{\bm{v}},v)\bm{\Sigma}^{-1}\left(\bm{n}\right)\operatorname{d}_{2\textbf{venc}}(\widetilde{\bm{v}},v )}}{\det (2\pi \bm{\Sigma}^{-1}\left(\bm{n}\right))},
\end{align}
where $\operatorname{d}_{\bm{z}}(\bm{x}, \bm{y})$ is a ``wrapped displacement'' between $\bm{x}$ and $\bm{y}$ with respect to $\bm{z}$, 
\begin{align}
    \label{eq: wrapped distance}
    \operatorname{d}_{\bm{z}} (\bm{x},\bm{y})            & \triangleq \bm{x}-\bm{y} - \left \lfloor (\bm{x}-\bm{y})\oslash \bm{z} \right \rceil\circ \bm{z}.
\end{align}
We use $-,\lfloor \cdot \rceil, \oslash$ to denote element-wise subtraction, rounding, and division.
One could perform search over all possible $\bm{k}$ to minimize the negative log likelihood, 
\begin{equation}
    \label{eq: negative log likelihood}
    \mathcal{L} \left(\widetilde{\bm{v}}, \widehat{v}_{\bm{k}}\right) = \tfrac{1}{2}\operatorname{d}_{2\textbf{venc}}^\intercal(\widetilde{\bm{v}},\widehat{v}_{\bm{k}})
    \bm{\Sigma}^{-1}\left(\bm{n}\right)\operatorname{d}_{2\textbf{venc}}(\widetilde{\bm{v}},\widehat{v}_{\bm{k}}).
\end{equation}
In \ref{subsection: PRoM}, we instead present a fast method to detect the best wrapping integers, $\bm{k}^\star$. In \ref{subsection: Three-point Encoding} and \ref{subsection: Design}, we explicitly consider three-point encoding, $N_e=3$, to provide concrete results and to optimize the design of \textbf{venc} and underlying first moments.

\subsection{PRoM}
\label{subsection: PRoM}
We introduce a fast estimator based on \eqref{eq: weight combination} and detection of the wrapping integers, $\bm{k}$. We refer to this estimator
as \emph{Phase Recovery from Multiple Wrapped Measurements} (PRoM). PRoM extends the prior signal processing result in \cite{Xia2015} to accommodate correlated phase errors and to provide a fast computation of the wrapping integers. Moreover, PRoM provides a grid-free alternative to grid search over $v$.

Assuming $v \in [0, \Omega)$, define the set $\mathcal{K}'(\widetilde{\bm{v}})$ of wrapping integers
\begin{align}
    \label{eq: K prime}
    \mathcal{K}'(\widetilde{\bm{v}}) &\triangleq \{\bm{k} |-\bm{1}  \preccurlyeq \bm{k} \preccurlyeq \bm{h}\}\\
    \label{eq: h}
    \bm{h} &= \Omega\oslash2\textbf{venc}.
\end{align}
By \eqref{eq: assumption 2}, $\mathbb{P} (\bm{k} \in \mathcal{K}'(\widetilde{\bm{v}})) \approx 1$. So, from \eqref{eq: weight combination} and \eqref{eq: cost}, we can minimize the negative log likelihood
\begin{equation}
\label{eq:  reduced negative log likelihood}
    \widehat{v} 
    =\underset{v \in [0,\Omega)}{\operatorname{argmin}}~ \mathcal{L}(\widetilde{\bm{v}}, v) 
    = \underset{v\in \{ \widehat{v}_{\bm{k}}| \bm{k} \in \mathcal{K}'(\widetilde{\bm{v}})\}}{\operatorname{argmin}}~ \mathcal{L}(\widetilde{\bm{v}}, v),
\end{equation}
with probability $\approx 1$. Next, we leverage the equality constraint in \eqref{eq: equality formulation} combined with the second approximation in \eqref{eq: assumption 2} to decrease the cardinality of $\mathcal{K}'(\widetilde{\bm{v}})$, denoted as $| \mathcal{K}'(\widetilde{\bm{v}})|$. We have
\begin{align*}
    \mathbb{P} \left( -\tfrac{1}{2} \preccurlyeq \bm{k} + (\widetilde{\bm{v}}-v)\oslash (2\textbf{venc}) \preccurlyeq \tfrac{1}{2}\right) \approx 1,
\end{align*}
which is equivalent to
\begin{align}
    \mathbb{P} \left( \bm{k} = \left\lceil -\tfrac{1}{2} - (\widetilde{\bm{v}}-v)\oslash (2\textbf{venc}) \right\rceil  \right) \approx 1,
\end{align}
where $\lceil \cdot \rceil$ is element-wise ceiling function. Then, the pruned search set for $\bm{k}$ can be expressed
\begin{align}
    \label{eq: K}
    \mathcal{K}(\widetilde{\bm{v}}) \triangleq \big\{\left\lceil -\tfrac{1}{2} - (\widetilde{\bm{v}}-v)\oslash (2\textbf{venc}) \right\rceil \big| v \in [0,\Omega) \big\}.
\end{align}
So, the parsimonious construction considers only 
$v \in [0,\Omega)$ such that $-\tfrac{1}{2} + \tfrac{v-\widetilde{v}_{ab}} {2\text{venc}_{ab}}$ is integer for any $2\text{venc}_{ab}$. The cardinality of the search set
\begin{equation*}
    |\mathcal{K}(\widetilde{\bm{v}})| \leq \bm{1}^\intercal\bm{h} .
\end{equation*}
Thus, the number of searches is bounded by the summation of $\bm{h}$ instead of product of $\bm{h}$. Then, the minimization in \eqref{eq:  reduced negative log likelihood} can be reduced to
\begin{equation}
    \label{eq: PRoM}
    \widehat{v} = \underset{v\in \{ \widehat{v}_{\bm{k}}| \bm{k} \in \mathcal{K}(\widetilde{\bm{v}})\}}{\operatorname{argmin}}~ \mathcal{L}(\widetilde{\bm{v}}, v),
\end{equation}
with probability $\approx 1$. Together, the construction of the pruned set, $\mathcal{K}(\widetilde{\bm{v}})$, of candidate wrapping integers and the efficient search over
$\{ \widehat{v}_{\bm{k}}| \bm{k} \in \mathcal{K}(\widetilde{\bm{v}})\}$ to minimize
$\mathcal{L}(\widetilde{\bm{v}}, v)$ comprise PRoM, an approximate MLE of $v$.
For a general $N_e$-point encoding, PRoM pseudo-code is given in Alg.~\ref{alg: PRoM}, and PRoM code is available at \url{https://github.com/Zhao-Shen/PRoM}.
\begin{algorithm}[!ht]
\caption{PRoM for $N_e$-point Encoding}
\label{alg: PRoM}
\begin{algorithmic}[1]
    \REQUIRE 
    Measurements, $\widetilde{\bm{Y}}$.
    First moments, $m_{11},...,m_{1N_e}$.
    \STATE Calculate $\textbf{venc}, \widetilde{\bm{v}}, \Omega,\mathcal{K}(\widetilde{\bm{v}})$ via (\ref{eq: vencs},~\ref{eq: congruence},~\ref{eq: Range Omega},~\ref{eq: K}).
    \STATE Calculate scaled $\bm{\Sigma}\left(\bm{n}\right)$ and $\bm{w}$ via (\ref{eq: general covariance},~\ref{eq: w}).
    \STATE Calculate $\widehat{v}$ via (\ref{eq: weight combination}, \ref{eq: PRoM}).
    \ENSURE $\widehat{v}$   
\end{algorithmic}
\end{algorithm}

Fig.~\ref{fig: cost} illustrates $\mathcal{L} (\widetilde{\bm{v}},\widehat{v})$ for $\textbf{venc} = [35, 10, 14]^\intercal$\,cm/s. The searched candidates $\{\widehat{v}_k| \bm{k} \in \mathcal{K}(\widetilde{\bm{v}})\}$ are marked by superimposed red dots. For this case, $|\mathcal{K}'(\widetilde{\bm{v}})|= 252$ and $ |\mathcal{K}(\widetilde{\bm{v}})|=14$;
thus, 94.4\% of the search space $\mathcal{K}'$ is bypassed via the proposed construction of $\mathcal{K}(\widetilde{\bm{v}})$.
If $\bm{n}$ is known to concentrate in a smaller volume compared to the second assumption \eqref{eq: assumption 2},  or $v$ is known and restricted to a range less than $\Omega$, then $\mathcal{K}(\widetilde{\bm{v}})$ may be further pruned accordingly. 

\begin{figure}
    \centering
    \includegraphics[width = \columnwidth]{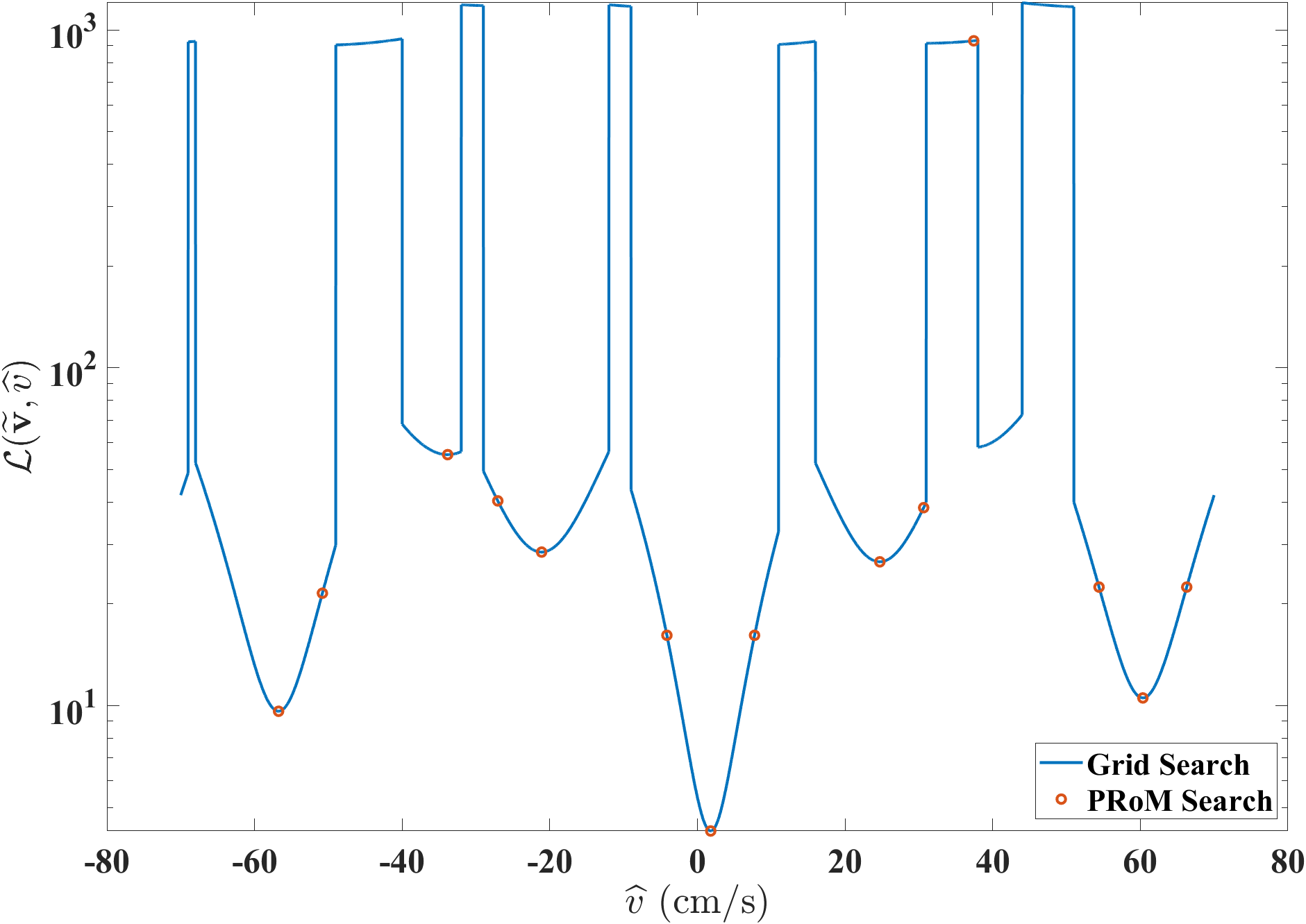}
    \caption{Negative log likelihood $\mathcal{L}(\widetilde{\bm{v}},\widehat{v}$), for single coil, symmetric three-point encoding, $\widetilde{\bm{v}} = [3,1,2]^\intercal$\,cm/s, $\textbf{venc} = [35,10,14]^\intercal$\,cm/s, $[s_{11} ,s_{21} , s_{31}] = [2.5,5,2.5]$. The red dots mark the PRoM searched candidates $\{\widehat{v}_k| \bm{k} \in \mathcal{K}(\widetilde{\bm{v}})\}$.}
    \label{fig: cost}
\end{figure}

To illustrate the reduction of computation complexity in PRoM, consider the case in Fig.~\ref{fig: cost}. Grid search MLE over velocity with spacing used in \cite{Carrillo2019} entails computation for $7000$ candidate velocities. In contrast, PRoM only requires search over only $|\mathcal{K}(\widetilde{\bm{v}}) |= 14$ candidates, yielding a $500$-times reduction.

PRoM admits a simple geometric interpretation. Observe that the noisy velocity measurement $\widetilde{\bm{v}}$ resides in a hyper-rectangle $\{\widetilde{\bm{v}}|\bm{0} \preccurlyeq \widetilde{\bm{v}} \preccurlyeq 2\textbf{venc}\}$.  The vector of noiseless velocity measurements $\langle v\rangle_{2\textbf{venc}}$ for $v \in [0,\Omega)$ is a point in the hyper-rectangle lying on wrapped line segments parallel to $\bm{1}$. Then, $\widehat{v}$ is found by an oblique projection of the noisy $\widetilde{\bm{v}} = \langle v + \bm{n}\rangle_{2\textbf{venc}}$ to the closest line segment. Here, the ``oblique projection" is determined by the $\bm{\Sigma}^{-1} (\bm{n})$ weighted distance. The search for the closest line segment is reduced to search for $\bm{k} \in \mathcal{K}(\widetilde{\bm{v}})$.

\subsection{Conditional Distribution of the Estimate}
\label{subsection: Conditional Distribution of the Estimate}
In this section, we derive $\mathbb{P}(\widehat{v}|v)$, the distribution of the estimated velocity given the true velocity; this somewhat technical derivation, in turn, enables the optimized design of \textbf{venc} and the underlying first moments. To derive the distribution, we first establish two lemmas. The first lemma tells us that adding the same constant, $\eta$, to all noise realization components does not affect the error in detecting the wrapping integers and simply adds $\eta$ to the PRoM velocity estimate, modulo $\Omega$.
\begin{lemma}
    \label{lemma 1}
    For $\eta \in \mathbb{R}$, let $\bm{n}' = \bm{n}+\eta $ and $ \widetilde{\bm{v}}' = \langle v  + \bm{n}'\rangle_{2\textbf{venc}}$. Then
    \begin{align}
        \widehat{v}(\widetilde{\bm{v}}') &= \langle \widehat{v}(\widetilde{\bm{v}}) +\eta \rangle_\Omega\\
        \langle\bm{k}^\star(\widetilde{\bm{v}})-\bm{k}(\widetilde{\bm{v}})\rangle_{\bm{h}} &= \langle\bm{k}^\star(\widetilde{\bm{v}}')-\bm{k}(\widetilde{\bm{v}}')\rangle_{\bm{h}} .
    \end{align}
\end{lemma}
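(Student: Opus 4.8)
The plan is to prove a stronger, fully deterministic statement from which the lemma is immediate: PRoM is \emph{translation equivariant}. Intuitively, shifting $\bm{n}$ by $\eta\bm{1}$ translates $\widetilde{\bm{v}}$ along the direction $\bm{1}$, so PRoM's oblique projection moves by $\eta$ along the wrapped line it lies on (cyclically, with period $\Omega$), while the detected and true wrapping vectors pick up a common integer shift that cancels in their difference. Since I will argue directly about the pruned candidate set $\mathcal{K}(\widetilde{\bm{v}})$, no ``with probability $\approx 1$'' qualifier is needed. First I would record the link between the two inputs: from $\widetilde{v}_{ab}=\langle v+n_{ab}\rangle_{2\text{venc}_{ab}}$ and $\langle\langle x\rangle_z+y\rangle_z=\langle x+y\rangle_z$ we get $\widetilde{v}_{ab}'=\langle\widetilde{v}_{ab}+\eta\rangle_{2\text{venc}_{ab}}$, so there is a unique integer vector $\bm{m}$ (the extra wrap induced by $\eta$) with $\widetilde{\bm{v}}'=\widetilde{\bm{v}}+\eta\bm{1}-\bm{m}\circ 2\textbf{venc}$. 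Everything afterward is bookkeeping with $\bm{m}$, driven by $\bm{w}^\intercal\bm{1}=1$ (from \eqref{eq: w}) and $\bm{h}\circ 2\textbf{venc}=\Omega\bm{1}$ with $\bm{h}$ integer-valued (from \eqref{eq: h} and $\Omega=\text{LCM}(2\textbf{venc})$).

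Next I would push $\widetilde{\bm{v}}'=\widetilde{\bm{v}}+\eta\bm{1}-\bm{m}\circ 2\textbf{venc}$ through each piece of PRoM. (i) For any integer vector $\bm{k}$, substitution into \eqref{eq: weight combination} with $\bm{w}^\intercal\bm{1}=1$ gives $\widehat{v}_{\bm{k}}(\widetilde{\bm{v}}')=\langle\widehat{v}_{\bm{k}-\bm{m}}(\widetilde{\bm{v}})+\eta\rangle_\Omega$; and since $\bm{w}^\intercal(\bm{h}\circ 2\textbf{venc})=\Omega$, the candidate velocity depends on $\bm{k}$ only modulo $\bm{h}$, i.e.\ $\widehat{v}_{\bm{k}+\bm{h}}(\widetilde{\bm{v}})=\widehat{v}_{\bm{k}}(\widetilde{\bm{v}})$. (ii) Because $\operatorname{d}_{2\textbf{venc}}$ is invariant under adding integer multiples of $2\textbf{venc}$ to its argument and $j\Omega\bm{1}=(j\bm{h})\circ 2\textbf{venc}$, the cost obeys $\mathcal{L}(\widetilde{\bm{v}}',\langle v+\eta\rangle_\Omega)=\mathcal{L}(\widetilde{\bm{v}},v)$ for every $v$ (the fixed matrix $\bm{\Sigma}^{-1}(\bm{n})$ is untouched). (iii) Substituting into \eqref{eq: K} gives $\mathcal{K}(\widetilde{\bm{v}}')=\bm{m}+\{\lceil-\tfrac{1}{2}-(\widetilde{\bm{v}}-v\bm{1})\oslash 2\textbf{venc}\rceil:v\in[-\eta,\Omega-\eta)\}$, and since that ceiling vector is increased by exactly $\bm{h}$ when $v$ is increased by $\Omega$, any length-$\Omega$ window of $v$ yields the same set of residues modulo $\bm{h}$; hence $\mathcal{K}(\widetilde{\bm{v}}')\equiv\bm{m}+\mathcal{K}(\widetilde{\bm{v}})\pmod{\bm{h}}$.

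Then I would assemble. Since $\widehat{v}_{\bm{k}}$ and $\mathcal{L}(\widetilde{\bm{v}},\widehat{v}_{\bm{k}})$ depend on $\bm{k}$ only modulo $\bm{h}$, (i) and (iii) give $\{\widehat{v}_{\bm{k}}(\widetilde{\bm{v}}'):\bm{k}\in\mathcal{K}(\widetilde{\bm{v}}')\}=\langle\{\widehat{v}_{\bm{k}}(\widetilde{\bm{v}}):\bm{k}\in\mathcal{K}(\widetilde{\bm{v}})\}+\eta\rangle_\Omega$, with corresponding candidates carrying equal cost by (ii); hence the PRoM minimizer satisfies $\widehat{v}(\widetilde{\bm{v}}')=\langle\widehat{v}(\widetilde{\bm{v}})+\eta\rangle_\Omega$, the first claim. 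Tracking which candidate attains the minimum, the recovered wrapping vector transforms as $\bm{k}^\star(\widetilde{\bm{v}}')\equiv\bm{k}^\star(\widetilde{\bm{v}})+\bm{m}\pmod{\bm{h}}$. For the true wrapping vector, comparing $\widetilde{\bm{v}}'+\bm{k}(\widetilde{\bm{v}}')\circ 2\textbf{venc}=v\bm{1}+\bm{n}+\eta\bm{1}$ with $\widetilde{\bm{v}}+\bm{k}(\widetilde{\bm{v}})\circ 2\textbf{venc}=v\bm{1}+\bm{n}$, together with the Step-one relation, gives $\bm{k}(\widetilde{\bm{v}}')=\bm{k}(\widetilde{\bm{v}})+\bm{m}$ exactly. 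Subtracting, the offset $\bm{m}$ cancels and $\langle\bm{k}^\star(\widetilde{\bm{v}}')-\bm{k}(\widetilde{\bm{v}}')\rangle_{\bm{h}}=\langle\bm{k}^\star(\widetilde{\bm{v}})-\bm{k}(\widetilde{\bm{v}})\rangle_{\bm{h}}$, the second claim; this conclusion is insensitive to which of $\bm{k},\bm{k}^\star$ is the ``true'' and which the ``recovered'' one, since both acquire the same $+\bm{m}$.

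I expect the main obstacle to be the modular accounting in (iii). Because PRoM operates on $[0,\Omega)$ and wraps $\widehat{v}_{\bm{k}}$ modulo $\Omega$, the recovered wrapping vector is only well defined modulo $\bm{h}$ ($\bm{k}$ and $\bm{k}+\bm{h}$ give the same candidate velocity and the same cost), so one must verify that $\mathcal{K}(\widetilde{\bm{v}}')$ and $\bm{m}+\mathcal{K}(\widetilde{\bm{v}})$ agree \emph{as sets of residues modulo $\bm{h}$} rather than on the nose --- which is exactly why the second identity of the lemma is, and must be, stated modulo $\bm{h}$, and where one must take care at the endpoints of $[0,\Omega)$ and of each $[0,2\text{venc}_{ab})$. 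Everything else reduces to routine identities for $\langle\cdot\rangle$, $\lfloor\cdot\rceil$, and $\lceil\cdot\rceil$.
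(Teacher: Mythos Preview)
Your proof is correct and more explicit than the paper's, though both rest on the same core observation that $\mathcal{L}(\widetilde{\bm{v}}',v+\eta)=\mathcal{L}(\widetilde{\bm{v}},v)$. The paper obtains this in two lines from the invariance of $\operatorname{d}_{2\textbf{venc}}$ under a common shift of both arguments, then concludes the first claim directly by viewing $\widehat{v}$ as the minimizer of $\mathcal{L}(\widetilde{\bm{v}},\cdot)$ over $[0,\Omega)$; for the second claim it writes $\langle\widehat{v}-\bm{w}^\intercal(v+\bm{n})\rangle_\Omega=\langle\bm{w}^\intercal(\langle\bm{k}^\star-\bm{k}\rangle_{\bm{h}}\circ 2\textbf{venc})\rangle_\Omega$ for both inputs, observes the left sides coincide, and invokes that the derivation did not use the specific form of $\bm{w}$ to separate the vector identity. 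Your route instead names the induced integer shift $\bm{m}$ and pushes it through $\widehat{v}_{\bm{k}}$, the cost, and the pruned candidate set $\mathcal{K}$; this is longer but buys two things. First, it verifies translation equivariance \emph{for the actual PRoM algorithm} restricted to $\mathcal{K}(\widetilde{\bm{v}})$, whereas the paper's argument implicitly works at the level of \eqref{eq:  reduced negative log likelihood} over all of $[0,\Omega)$ and does not touch $\mathcal{K}$. Second, it replaces the somewhat delicate ``for all $\bm{w}$'' separation step by the direct computation $\bm{k}^\star(\widetilde{\bm{v}}')-\bm{k}(\widetilde{\bm{v}}')=(\bm{k}^\star(\widetilde{\bm{v}})-\bm{k}(\widetilde{\bm{v}}))+j\bm{h}$ for some integer $j$, from which the elementwise-mod-$\bm{h}$ identity is immediate. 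Your flagged obstacle (iii) is real but you handle it correctly: because increasing $v$ by $\Omega$ shifts the ceiling vector by the \emph{full} vector $\bm{h}$, the two length-$\Omega$ windows differ on each element by $0$ or $\pm\bm{h}$, and since $\widehat{v}_{\bm{k}+\bm{h}}=\widehat{v}_{\bm{k}}$ and $\mathcal{L}$ inherits the same periodicity, the candidate sets, costs, and minimizers match as claimed.
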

\begin{proof}
By \eqref{eq: wrapped distance}, we have
    \begin{align*}
        \operatorname{d}_{2\textbf{venc}} (\widetilde{\bm{v}}', v+\eta) =& \operatorname{d}_{2\textbf{venc}} (\langle \widetilde{\bm{v}}+\eta \rangle_{2\textbf{venc}}, v+\eta)\\
        =& \operatorname{d}_{2\textbf{venc}} (\widetilde{\bm{v}}+\eta, v+\eta)\\
        =& \operatorname{d}_{2\textbf{venc}} (\widetilde{\bm{v}},v) .
    \end{align*}
Thus $\mathcal{L}(\widetilde{\bm{v}}, v) = \mathcal{L}(\widetilde{\bm{v}}', v+\eta)$ and $\widehat{v}(\widetilde{\bm{v}}') = \langle \widehat{v}(\widetilde{\bm{v}}) +\eta \rangle_\Omega$. Below we compare estimates using $\bm{k}^\star$ versus the true $\bm{k}$ along the $\bm{1}$ direction. By \eqref{eq: weight combination}
    \begin{align}
        &\langle\widehat{v}(\widetilde{\bm{v}}) - \bm{w}^\intercal(v+\bm{n}) \rangle_\Omega \nonumber\\
        \label{eq: v diff}
        =& \langle \bm{w}^\intercal(\langle \bm{k}^\star(\widetilde{\bm{v}})-\bm{k}(\widetilde{\bm{v}})\rangle_{\bm{h}}\circ 2\textbf{venc})\rangle_\Omega,
        \end{align}
    and
    \begin{align}
        &\langle\widehat{v}(\widetilde{\bm{v}}') - \bm{w}^\intercal(v+\bm{n}') \rangle_\Omega \nonumber\\
        \label{eq: v' diff}
        =& \langle \bm{w}^\intercal(\langle \bm{k}^\star(\widetilde{\bm{v}}')-\bm{k}(\widetilde{\bm{v}}')\rangle_{\bm{h}}\circ 2\textbf{venc})\rangle_\Omega.
    \end{align}
By the previously derived $\widehat{v}(\widetilde{\bm{v}}') = \langle \widehat{v}(\widetilde{\bm{v}}) +\eta \rangle_\Omega$, so we have
    \begin{align*}
        & \langle\widehat{v}(\widetilde{\bm{v}}') - \bm{w}^\intercal(v+\bm{n}') \rangle_\Omega\\
        =& \langle\widehat{v}(\widetilde{\bm{v}})+\eta - \bm{w}^\intercal(v+\bm{n}) - \eta \rangle_\Omega\\
        =& \langle\widehat{v}(\widetilde{\bm{v}}) - \bm{w}^\intercal(v+\bm{n}) \rangle_\Omega .
    \end{align*}
    Thus \eqref{eq: v diff} and \eqref{eq: v' diff} are equal for all $\bm{w}$, and
    \begin{equation*}
        \langle\bm{k}^\star(\widetilde{\bm{v}})-\bm{k}(\widetilde{\bm{v}})\rangle_{\bm{h}} = \langle\bm{k}^\star(\widetilde{\bm{v}}')-\bm{k}(\widetilde{\bm{v}}')\rangle_{\bm{h}} .
    \end{equation*}
\end{proof}
Fig.~\ref{fig: distribution of k star} provides visualisation of wrapping integers $\bm{k}^\star(\widetilde{\bm{v}})$ for the case $\textbf{venc} = [35,10,14]^\intercal, [s_{11}, s_{21}, s_{31}] =[2.5, 5, 2.5]$. All $\widetilde{\bm{v}}$ along direction $\bm{1}$ inside the hyper-rectangle share the same $\bm{k}^\star$, which is a consequence of Lemma \ref{lemma 1}.
\begin{figure}[h!]
    \centering
    \includegraphics[width = \columnwidth]{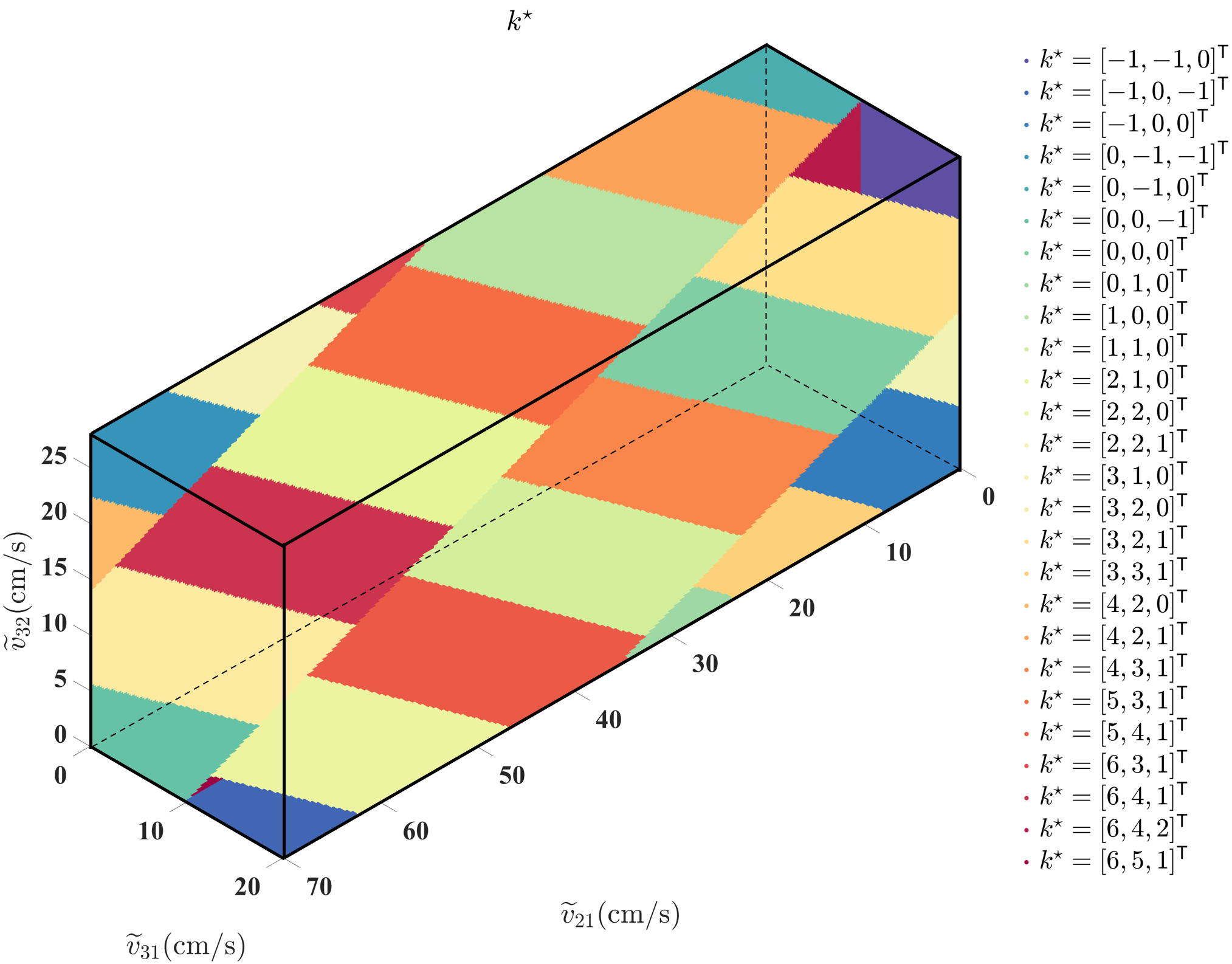}
    \caption{Illustration of wrapping integers $\bm{k}(\widetilde{\bm{v}})$, $N_c=1$, $N_e = 3$, $\textbf{venc} = [35,10,14]^\intercal$, $[s_{11} ,s_{21} , s_{31}] = [2.5,5,2.5]$. The colored tiles for each region are drawn on the surfaces of the hyper-rectangle, and regions extend unchanged parallel to the $[1,1,1]^\intercal$ direction, which is the direction of the line of sight in the figure. PRoM detects wrapping integers for fast, accurate estimation of velocity.}
    \label{fig: distribution of k star}
\end{figure}

In addition, $\mathcal{L}(\widetilde{\bm{v}},v_{\bm{k}})$ as a function of $\widetilde{\bm{v}}$ is piece-wise quadratic with the same curvature for all $\bm{k}$, so the decision boundaries of $\bm{k}^\star (\widetilde{\bm{v}})$ are linear, which is also illustrated in Fig.~\ref{fig: distribution of k star}.

By Lemma \ref{lemma 1},  the error in wrapping integers, $\langle\bm{k}^\star(\widetilde{\bm{v}})-\bm{k}(\widetilde{\bm{v}})\rangle_{\bm{h}}$, remains constant for all noise realizations $\bm{n}$ along any line parallel to $\bm{1}$.
So, we can divide the space of all possible noise realizations, $\mathbb{R}^{\frac{N_e(N_e-1)}{2}}$, into ``tubes'' $\mathcal{T}(\bm{x})$ parallel to $\bm{1}$, based on the difference, $\bm{x}$, of estimated wrapping integers $\bm{k}^\star$ and true wrapping integers $\bm{k}$.
\begin{equation}
    \label{eq: tube}
    \mathcal{T}(\bm{x}) \triangleq \{ \bm{n}| \langle \bm{k}^\star (\widetilde{\bm{v}})-\bm{k}(\widetilde{\bm{v}})\rangle_{\bm{h}} = \bm{x} \}.
\end{equation}
We can, as seen below, integrate over these tubes to arrive at error probabilities for detecting the wrapping integers.
Next we establish a second lemma describing the orthogonality between pairwise noise differences and the error in the estimated velocity. 
\begin{lemma}
    \label{lemma 2}
    \begin{equation}
        n_{ab}-n_{cd} \perp \bm{w}^\intercal \bm{n}, ab \not = cd
    \end{equation}
\end{lemma}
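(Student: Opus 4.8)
The plan is to show that the random variable $n_{ab}-n_{cd}$ is uncorrelated with the linear combination $\bm{w}^\intercal\bm{n}$ that produces the velocity estimate, using only the defining property of $\bm{w}$ in \eqref{eq: w} together with joint Gaussianity from \eqref{eq: assumption 1} (so that ``uncorrelated'' upgrades to the stated independence $\perp$). First I would write $n_{ab}-n_{cd} = (\bm{e}_{ab}-\bm{e}_{cd})^\intercal\bm{n}$, where $\bm{e}_{ab}$ is the standard basis vector selecting the $(ab)$ component, and note that since $ab\neq cd$ the vector $\bm{u}\triangleq \bm{e}_{ab}-\bm{e}_{cd}$ satisfies $\bm{1}^\intercal\bm{u}=0$. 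Then the covariance of the two scalars is $\operatorname{cov}(\bm{u}^\intercal\bm{n},\,\bm{w}^\intercal\bm{n}) = \bm{u}^\intercal\bm{\Sigma}(\bm{n})\bm{w}$.

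Next I would substitute the closed form for $\bm{w}$. From \eqref{eq: w}, $\bm{\Sigma}(\bm{n})\bm{w} = \bm{\Sigma}(\bm{n})\bm{\Sigma}^{-1}(\bm{n})\bm{1}\,/\,\big(\bm{1}^\intercal\bm{\Sigma}^{-1}(\bm{n})\bm{1}\big) = \bm{1}\,/\,\big(\bm{1}^\intercal\bm{\Sigma}^{-1}(\bm{n})\bm{1}\big)$, i.e. $\bm{\Sigma}(\bm{n})\bm{w}$ is proportional to $\bm{1}$. Hence $\bm{u}^\intercal\bm{\Sigma}(\bm{n})\bm{w} \propto \bm{u}^\intercal\bm{1} = 0$, so the covariance vanishes. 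Since $\bm{n}$ is jointly Gaussian, the pair $\big(n_{ab}-n_{cd},\,\bm{w}^\intercal\bm{n}\big)$ is jointly Gaussian as a linear image of $\bm{n}$, and zero covariance therefore implies independence. This gives the claimed orthogonality.

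The one subtlety worth flagging is invertibility of $\bm{\Sigma}(\bm{n})$: the text itself notes the true covariance is close to rank deficient and uses the projection $\Pi(\cdot)$ and a scaling, so $\bm{\Sigma}(\bm{n})$ in \eqref{eq: general covariance} should be read as the (scaled, PSD-projected) matrix that is actually used in \eqref{eq: w}, which is assumed invertible there; the argument above only needs the identity $\bm{\Sigma}(\bm{n})\bm{\Sigma}^{-1}(\bm{n})=\bm{I}$ and costs nothing more. The main ``obstacle'' is thus not really an obstacle — it is simply making sure the reader sees that the weight vector was defined precisely so that $\bm{\Sigma}(\bm{n})\bm{w}\parallel\bm{1}$, which is the BLUE normalization $\bm{w}^\intercal\bm{1}=1$ dualized; once that is made explicit, the rest is the one-line computation $\bm{u}^\intercal\bm{1}=0$. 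I would present the proof in exactly that order: express both quantities as linear functionals of $\bm{n}$; compute the covariance; invoke the form of $\bm{w}$ to reduce $\bm{\Sigma}(\bm{n})\bm{w}$ to a multiple of $\bm{1}$; conclude with $\bm{1}^\intercal(\bm{e}_{ab}-\bm{e}_{cd})=0$ and Gaussianity.
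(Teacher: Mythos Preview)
Your proposal is correct and follows essentially the same approach as the paper: both compute $\operatorname{cov}(\bm{w}^\intercal\bm{n},\,n_{ab}-n_{cd})$ by writing the difference as $(\bm{e}_{ab}-\bm{e}_{cd})^\intercal\bm{n}$ and reducing via $\bm{1}^\intercal\bm{\Sigma}^{-1}(\bm{n})\bm{\Sigma}(\bm{n})(\bm{e}_{ab}-\bm{e}_{cd})=\bm{1}^\intercal(\bm{e}_{ab}-\bm{e}_{cd})=0$. Your version is slightly more explicit in invoking joint Gaussianity to upgrade zero covariance to independence, which the paper leaves implicit.
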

\begin{proof}
    \begin{align*}
        &\text{cov}(\bm{w}^\intercal \bm{n}, n_{ab}-n_{cd})\\
        =& \bm{w}^\intercal \mathbb{E} (\bm{n}  (n_{ab}-n_{cd}))\\
        =& \frac{\bm{1}^\intercal \bm{\Sigma}^{-1}\left(\bm{n}\right)\bm{\Sigma}\left(\bm{n}\right)(\bm{e}_{ab}-\bm{e}_{cd})}{\bm{1}^\intercal \bm{\Sigma}\left(\bm{n}\right)^{-1}\bm{1}} = 0,
    \end{align*}
    where $\bm{e}_{ab}$ is the standard basis equal to $1$ at one position corresponding to $n_{ab}$ in $\bm{n}$, and $0$ otherwise.
\end{proof}
Armed with the two lemmas, we can specify the distribution.
\begin{theorem}
    \label{theorem 1}
    Given $v$, $\forall \bm{n} \in \mathcal{T}(\bm{x})$, $\widehat{v}(\widetilde{\bm{v}})$ follows wrapped normal distribution $\mathcal{N}(\langle v + \bm{w}^\intercal (\bm{x}\circ 2\textbf{venc})\rangle_\Omega, \bm{w}^\intercal \bm{\Sigma}\left(\bm{n}\right)\bm{w})$.
\end{theorem}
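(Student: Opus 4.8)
The plan is to condition on a fixed tube $\mathcal{T}(\bm{x})$ and track how $\widehat{v}(\widetilde{\bm{v}})$ decomposes along and transverse to the $\bm{1}$ direction. First I would write, for $\bm{n} \in \mathcal{T}(\bm{x})$, the identity obtained from \eqref{eq: v diff}, namely $\langle \widehat{v}(\widetilde{\bm{v}}) - \bm{w}^\intercal(v+\bm{n})\rangle_\Omega = \langle \bm{w}^\intercal(\bm{x}\circ 2\textbf{venc})\rangle_\Omega$, which holds because on the tube the wrapping-integer error is the constant $\bm{x}$. Rearranging, $\widehat{v}(\widetilde{\bm{v}}) = \langle v + \bm{w}^\intercal\bm{n} + \bm{w}^\intercal(\bm{x}\circ 2\textbf{venc})\rangle_\Omega$. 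So conditioned on $\bm{n}\in\mathcal{T}(\bm{x})$, the estimate is a deterministic shift (by $v + \bm{w}^\intercal(\bm{x}\circ 2\textbf{venc})$, modulo $\Omega$) of the single scalar random variable $\bm{w}^\intercal\bm{n}$.

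The crux is then to identify the conditional distribution of $\bm{w}^\intercal\bm{n}$ given $\bm{n}\in\mathcal{T}(\bm{x})$. Here I would invoke Lemma~\ref{lemma 1}: the event $\{\bm{n}\in\mathcal{T}(\bm{x})\}$ is invariant under translation of $\bm{n}$ along $\bm{1}$, i.e. it is determined purely by the component of $\bm{n}$ in the subspace orthogonal (in the ordinary sense) to $\bm{1}$ — or more precisely, $\mathcal{T}(\bm{x})$ is a union of full lines in the $\bm{1}$ direction, so membership depends only on $\bm{n}$ modulo $\operatorname{span}\{\bm{1}\}$. The geometric picture from Fig.~\ref{fig: distribution of k star}, with linear decision boundaries parallel to $\bm{1}$, makes this precise. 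Since $\bm{n}$ is jointly Gaussian by \eqref{eq: assumption 1}, I want to argue that $\bm{w}^\intercal\bm{n}$ is independent of the quotient variable $\bm{n} \bmod \operatorname{span}\{\bm{1}\}$; then conditioning on the tube leaves the law of $\bm{w}^\intercal\bm{n}$ unchanged, i.e. still $\mathcal{N}(0, \bm{w}^\intercal\bm{\Sigma}(\bm{n})\bm{w})$, and the theorem follows from the shift computed in the first paragraph, with "wrapped normal" arising from the $\langle\cdot\rangle_\Omega$.

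To get this independence I would use Lemma~\ref{lemma 2}. The subspace $\{\bm{n} : \bm{1}^\intercal$-translates are excluded$\}$ — equivalently the directions spanning $\bm{n} \bmod \operatorname{span}\{\bm{1}\}$ — is spanned by difference vectors $\bm{e}_{ab}-\bm{e}_{cd}$, and Lemma~\ref{lemma 2} says $\bm{w}^\intercal\bm{n}$ is uncorrelated with each $n_{ab}-n_{cd}$. For jointly Gaussian variables, zero correlation gives independence, so $\bm{w}^\intercal\bm{n}$ is independent of the entire vector of pairwise differences, hence of $\bm{n} \bmod \operatorname{span}\{\bm{1}\}$, hence of the indicator $\mathbf{1}_{\mathcal{T}(\bm{x})}(\bm{n})$. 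That is the key step. The main obstacle I anticipate is the bookkeeping needed to justify that $\mathcal{T}(\bm{x})$ really is measurable with respect to $\sigma(\bm{n} \bmod \operatorname{span}\{\bm{1}\})$ — i.e. that Lemma~\ref{lemma 1}'s "constant along $\bm{1}$" statement upgrades to "$\mathcal{T}(\bm{x})$ is a cylinder over a Borel set in the quotient" — and relatedly, handling the wrapping modulo $\Omega$ carefully so that the conclusion is a genuine wrapped normal on $[0,\Omega)$ rather than merely a statement about an unwrapped representative. Both are routine but need to be stated cleanly; everything else is linear algebra plus the Gaussian zero-correlation-implies-independence fact.
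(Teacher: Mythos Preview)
Your proposal is correct and follows essentially the same route as the paper: decompose $\widehat{v}(\widetilde{\bm{v}})$ on the tube as $\langle v + \bm{w}^\intercal(\bm{x}\circ 2\textbf{venc}) + \bm{w}^\intercal\bm{n}\rangle_\Omega$, observe via Lemma~\ref{lemma 1} that tube membership depends only on the pairwise differences $n_{ab}-n_{cd}$, and then use Lemma~\ref{lemma 2} plus joint Gaussianity to conclude that $\bm{w}^\intercal\bm{n}$ is independent of that event. The paper's proof is exactly this, stated more tersely; the measurability and wrapping bookkeeping you flag as obstacles are indeed routine and are simply left implicit in the paper.
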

\begin{proof}
Note that the event $\bm{n} \in \mathcal{T}(\bm{x})$ is only determined by the pairwise difference of $n_{ab}-n_{cd}$. Thus, by Lemma 2 and the Gaussian assumption,  the random variable
$\bm{w}^\intercal \bm{n}$ is independent of the membership $\bm{n} \in \mathcal{T}(\bm{x})$.
Hence, for every region $\mathcal{T}(\bm{x})$, the conditional distribution of velocity estimate from noisy data is given by
    \begin{align*}
        \widehat{v}(\widetilde{\bm{v}}) |  \bm{n} \in \mathcal{T}(\bm{x}) 
        &= \langle \bm{w}^\intercal (v+\bm{x}\circ 2\textbf{venc} + \bm{n})\rangle_\Omega \\
        &= \langle v + \bm{w}^\intercal (\bm{x}\circ 2\textbf{venc}) + \bm{w}^\intercal \bm{n}\rangle_\Omega. 
    \end{align*}
\end{proof}
Let $f(\widehat{v}| \mathcal{T}(\bm{x}), v)$ denote the conditional Gaussian probability density function (pdf) in Theorem~\ref{theorem 1} without wrapping by modulo $\Omega$. Thus, by wrapping the pdf function and invoking the law of total probability, we have
\begin{align}
\label{eq: vhatdist}
    \mathbb{P}(\widehat{v}|v) &= \sum_{\bm{x}} \mathbb{P}(\bm{n}\in \mathcal{T}(\bm{x}) | v) f_{\Omega}(\widehat{v}| \mathcal{T}(\bm{x}), v)\\
    f_{\Omega}(\widehat{v}| \mathcal{T}(\bm{x}), v) &= 
    \begin{cases}
        \sum_{l\in \mathbb{Z}} f(\widehat{v}+l\Omega| \mathcal{T}(\bm{x}), v),& \widehat{v} \in [0,\Omega)\\
        0,& \text{otherwise}
    \end{cases}. \nonumber
\end{align}
To complete \eqref{eq: vhatdist}, we need only to calculate $\mathbb{P}(\bm{n}\in \mathcal{T}(\bm{x}) | v)$ by integration of a multivariate normal distribution. Leveraging Lemma \ref{lemma 1}, the integration can be simplified to a definite integration of a normal distribution in $\frac{N_e(N_e-1)}{2}-1$ variables.

Fig.~\ref{fig: distribution} illustrates the histogram of $\widehat{v}|v=0$ using $10^5$ trials, $N_e = 3$, $N_c = 1$, $\textbf{venc} = [99,18,22]^\intercal$\,cm/s at $s_{21} =5,10$. Invoking Theorem \ref{theorem 1}, we predict the five most probable wrapping integers, which correspond to detection regions $\mathcal{T}([0,0,0]^\intercal), \mathcal{T}([5,4,1]^\intercal), \mathcal{T}([6,5,1]^\intercal),\mathcal{T}([1,1,0]^\intercal)$, and $\mathcal{T}([10,8,2]^\intercal)$.
These five predicted detections correspond to $f(\widehat{v}| \mathcal{T}(\bm{x}), v)$ centered at $0$ (the true velocity), $\pm 177.81$, and $\pm 40.37$\,cm/s; these five predictions are validated in the histogram. For the higher SNR case of 
$s_{21} = 10$, the probability of unwrapping errors goes very small, and the $10^5$ trials are insufficient
to encounter an unwrapping error to $\pm 40.37$\,cm/s.

\begin{figure}[h!]
    \centering
    \includegraphics[width = \columnwidth]{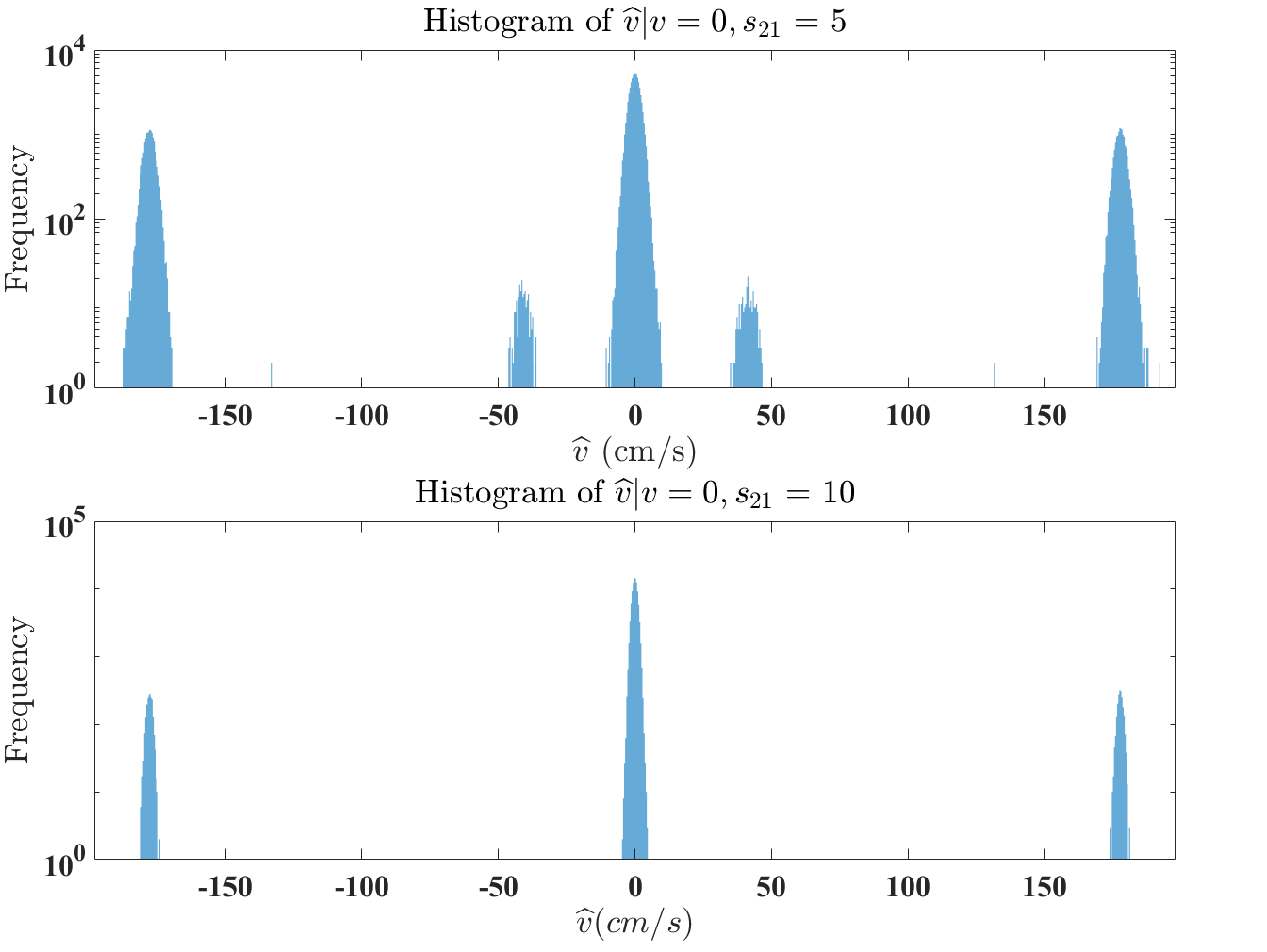}
    \caption{Histogram from $10^5$ trials of $\widehat{v}|v=0$, for $N_c=1$, $N_e=3$, $\textbf{venc} = [99,18,22]^\intercal$\,cm/s, $v = 0$, at $s_{21} = 5$ (top) and $s_{21} = 10$ (bottom). $2s_{11} = s_{21} = 2s_{31}$. Note the logarithmic scale for the vertical axis covering four (top) and five (bottom) orders of magnitude. The histogram illustrates both noise sensitivity via the spread of each Gaussian component and the probability of unwrapping errors via the presence of multiple components.}
    \label{fig: distribution}
\end{figure}

\subsection{Three-point Encoding}
\label{subsection: Three-point Encoding}
We consider here three-point encoding for velocity in one direction. Due to (\ref{eq: vencs}, \ref{eq: Range Omega}), every $\text{venc}_{ab}$, and hence the unambiguous range, $\Omega$, depends only on the differences between first moments; thus, $\text{venc}_{ab}$ and $\Omega$ are unaffected by adding the same constant to every first moment. We assume the following ordering for the first moments:
\begin{equation}
    \label{eq: symm3pt}
    m_{11} < m_{12}  < m_{13},\,  m_{12}-m_{11} < m_{13}-m_{11}.
\end{equation}
Thus, the three vencs are ordered: $\text{venc}_{31}<\text{venc}_{32}<\text{venc}_{21}$. Let $\xi = \frac{\text{venc}_{32}}{\text{venc}_{31}}$, noting that \eqref{eq: vencs} and \eqref{eq: symm3pt} imply $\xi \in (1,2)$. For rational $\xi=p/q$ with co-prime integers $p$ and $q$, the unambiguous range $\Omega$ in \eqref{eq: Range Omega} is 
\begin{equation}
    \label{eq: range vs largest venc}
    \Omega = 2(p-q)\text{venc}_{21} .
\end{equation}
Thus, by jointly unwrapping multiple vencs, one can construct an unaliased velocity range that is larger than the highest venc, $\text{venc}_{21}$, by a factor of $2(p-q)$. The covariance matrix for three-point encoding can be formulated via \eqref{eq: general covariance}. Correspondingly, we can calculate the combination weights $\bm{w}$ and the resulting RMSE given wrapping integers (\ref{eq: weight combination}, \ref{eq: rmse given k}). Armed with the explicit error variance and the probability of unwrapping errors derived below, we present in \ref{subsection: Design} an optimized design of \textbf{venc} for three-point encoding with the constraint on the largest first moment, given a desired unambiguous range of velocities to be observed and SNR level for each encoding.

\subsection{Existing Estimators for Three-point Encoding}
\label{subsection: Existing Dual-Venc Estimators}
In the notation of \eqref{eq: remainder notation} and \eqref{eq: definition}, the approaches in \cite{Lee1995}, \cite{Schnell2017} use the unaliased high venc measurement, $\widetilde{v}_{21}$, to unwrap the low venc measurement, $\widetilde{v}_{31}$, while $\text{venc}_{32}$ goes unused. The estimator in \cite{Schnell2017}, which we denote as \emph{standard dual-venc} (SDV),
is given by
\begin{equation}
    \widehat{v} = 
    \begin{cases}
        \widetilde{v}_{21} - 4\text{venc}_{31},~ \tfrac{\widetilde{v}_{21}-\widetilde{v}_{31}}{2\text{venc}_{31}} \in (-2.4, -1.6)\\
        \widetilde{v}_{21} - 2\text{venc}_{31},~ \tfrac{\widetilde{v}_{21}-\widetilde{v}_{31}}{2\text{venc}_{31}} \in (-1.2, -0.8)\\
        \widetilde{v}_{21} + 2\text{venc}_{31},~ \tfrac{\widetilde{v}_{21}-\widetilde{v}_{31}}{2\text{venc}_{31}} \in (0.8, 1.2) \\
        \widetilde{v}_{21} + 4\text{venc}_{31},~ \tfrac{\widetilde{v}_{21}-\widetilde{v}_{31}}{2\text{venc}_{31}} \in (1.6, 2.4) .
    \end{cases}
\end{equation}
In \cite{Carrillo2019}, two potentially aliased measurements $\widetilde{v}_{31}, \widetilde{v}_{32}$ are jointly unwrapped by minimizing
\begin{align}
    \label{eq: ODV cost}
    \widehat{v} = \underset{{v\in [-\Omega/2,\Omega/2)}}{\operatorname{argmin}}~
    \sum_{l=(31,32)} \left(1- \cos \left( \tfrac{\pi v}{ \text{venc}_l}  - \widetilde{\theta}_l \right)\right).
\end{align}
The authors dub their approach  \emph{optimal dual-venc} (ODV), and the minimization is accomplished by searching $v\in [-\Omega/2,\Omega/2)$ with grid spacing $\frac{\text{venc}_{31}}{1000}$. The cost adopted in ODV is equivalent to
\begin{equation}
\label{eq: ODV2}
    \tfrac{1}{2} \left|e^{i\frac{\pi v}{\text{venc}_{31}}}-e^{i\widetilde{\theta}_{31}} \right|^2 + \tfrac{1}{2} \left|e^{i\frac{\pi v}{\text{venc}_{32}}}-e^{i\widetilde{\theta}_{32}} \right|^2,
\end{equation}
which intrinsically assumes no correlation between the noisy phase differences, $\widetilde{\theta}_{31}$ and $\widetilde{\theta}_{32}$. 
The ODV approach recommends $\text{venc}_{32} = \frac{3}{2} \text{venc}_{31}$, yielding an unambiguous velocity range of length $2 \text{venc}_{21}$, which is twice the highest venc.
The choice $\frac{3}{2}$ is a heuristic to lessen the probability of unwrapping errors when minimizing \eqref{eq: ODV cost} in the presence of noise. 

The non-convex optimization (NCO) in \cite{loecher2018velocity} iteratively minimizes a cost similar to \eqref{eq: ODV2} with weights to accommodate a lower SNR in presence of intra-voxel dephasing:
\begin{align}
    |\widetilde{r}_{31} |^2 \left|e^{i\frac{\pi v}{\text{venc}_{31}}}-e^{i\widetilde{\theta}_{31}} \right|^2 + 
    |\widetilde{r}_{32} |^2 \left|e^{i\frac{\pi v}{\text{venc}_{32}}}-e^{i\widetilde{\theta}_{32}} \right|^2.
\end{align}
Both ODV and NCO can be applied to any number of encodings; further, the NCO algorithm also incorporates a spatial regularization across voxels in the form of the Laplacian of the velocity map.

\subsection{Design for Three-point Encoding}
\label{subsection: Design}
The selection of the pair $\left\{ \text{venc}_{31}, \xi \right\}$ defines first moments $[m_{11},m_{12},m_{13}]$ up to translation. To minimize the worst intra-voxel dephasing, we choose symmetric encoding $m_{11} = -m_{13}$; to further ameliorate intra-voxel dephasing, we allow for a user-defined upper bound on the largest first moment,
$m_{13} \leq m_\tau$. The choice of \textbf{venc} entails the interplay of noise sensitivity, probability of correct unwrapping, and the range of reliably unaliased velocities. We adopt a performance guarantee strategy for navigating these competing objectives.  The design inputs are:
the maximum range of velocities to be reliably detected, $\Omega_{\bm{\epsilon}}$;
a lower bound of operating measurement SNR, $s_{\alpha \beta}$;
an upper bound, $m_\tau$, on the magnitude of the largest first moment; 
and, bounds on the per-voxel probability of an unwrapping error. The design outputs are the \textbf{venc} and an underlying $[m_{11},m_{12},m_{13}]$. To formalize the notion of optimality, we make four definitions.

\begin{definition}[Unwrapping error]
    If $\langle \bm{k}^\star (\widetilde{\bm{v}}) -\bm{k}(\widetilde{\bm{v}}) \rangle_{\bm{h}} \not = \bm{0}$, i.e., wrapping integers are incorrectly detected, then we say an \textit{Unwrapping Error} occurs.
\end{definition}

\begin{definition}[Aliasing error]
    Given no unwrapping error, if $\widehat{v}_{\bm{k}^\star}$ is aliased, then we say an \textit{Aliasing Error} occurs.
\end{definition} 

\begin{definition}[$\bm{\epsilon}$-Reliable Range]
    For given measurement SNR, $s_{\alpha\beta}$, and vector of two small numbers $\bm{\epsilon} = [\epsilon_1, \epsilon_2]^\intercal$, the $\bm{\epsilon}$-\textit{reliable range} $\Omega_{\bm{\epsilon}}$ is the range of the velocities for which $\mathbb{P}(\text{Unwrapping Error}) \leq \epsilon_1$ and $\mathbb{P}(\text{Aliasing Error}) \leq \epsilon_2$.
\end{definition}

Definition 3 allows us to specify a reliable velocity range $\Omega_{\bm{\epsilon}} <  \Omega $ to guard against aliasing. Armed with these three definitions, we can now state a precise meaning of optimality for three-point design.
\begin{definition}[Optimal \textbf{venc} Design] 
\label{def:opt}
    Given the SNR for the complex-valued data $s_{\alpha \beta}$, the desired maximum velocity range of length $\Omega_{\bm{\epsilon}}$, an upper bound $m_\tau$ on the largest first moment, and unwrapping error bounds  $\bm{\epsilon}$, the optimal \textbf{venc} minimizes the RMSE among all designs for which the unwrapping and aliasing errors satisfy $\mathbb{P}( \text{Unwrapping Error}) \leq \epsilon_1$  and $\mathbb{P}( \text{Aliasing Error}) \leq \epsilon_2$ across the entire range of length $\Omega_{\bm{\epsilon}}$.
\end{definition}

Given SNR and \textbf{venc}, $\mathbb{P}( \text{Unwrapping Error})$ can be calculated through Monte Carlo simulation by setting $v=0$ and counting the trials for which $\langle \bm{k}^\star (\widetilde{\bm{v}}) -\bm{k}(\widetilde{\bm{v}}) \rangle_{\bm{h}} \not = \bm{0}$. Independence of $\mathcal{T}(\bm{x})$ and  $\widehat{v}(\tilde{\bm{v}})$
allows simulation of $v=0$ to be sufficient.  The number of trials is selected as $100/\epsilon_1$ \cite{jeruchim1984techniques}. 

Bounding the Aliasing Error can be achieved via explicitly designing $\Omega$ different from $\Omega_{\bm{\epsilon}}$. Let the normal cumulative distribution function be $\Phi(\cdot)$. Then, $\mathbb{P}( \text{Aliasing Error}) \leq \epsilon_2$ when
\begin{equation}
    \Omega - \Omega_{\bm{\epsilon}} \geq 2\Phi^{-1}(1-\epsilon_2) \sqrt{\bm{w}^\intercal \bm{\Sigma}\left(\bm{n}\right)\bm{w}}    .
\end{equation}

The design procedure in Alg.~\ref{alg: PRoM Optimal Design for Three-Point Encoding} is an offline finite search to optimally design \textbf{venc} and the corresponding first moments.
To explore design options for $\xi \in (1,2)$, we search rational values $\xi = p/q$ among
\begin{equation}
   \left\{\tfrac{p}{q}\big| \gcd(p,q) = 1, \tfrac{p}{q} \in (1,2), p \leq P, q\leq Q \right\},
\end{equation}
where $\gcd(\cdot,\cdot)$ is greatest common divisor, and $P,Q \in \mathbb{Z}^+$ are predefined upper bounds on the positive integers $p,q$.

\begin{algorithm}[!ht]
\caption{{PRoM} Optimal Design for Three-Point Encoding}
\label{alg: PRoM Optimal Design for Three-Point Encoding}
\begin{algorithmic}[1]
    \REQUIRE 
    $P, Q, s_{\alpha\beta}, \bm{\epsilon}, \Omega_{\bm{\epsilon}}$, $m_\tau$
    \STATE $\sigma \leftarrow \infty$, $p \leftarrow 2$
    \WHILE{$p \leq P$}
    \STATE $q \leftarrow 1$
    \WHILE{$q \leq Q$}
    \IF{$\gcd(p,q) =1$ and $\frac{p}{q} \in (1,2)$}
    \STATE $v \leftarrow 0$, $\textbf{venc} \leftarrow [pq, q(p-q), p(p-q)]$.
    \STATE Simulate $\widetilde{\bm{v}}$ with $100 \epsilon_1^{-1}$ trials, apply PRoM.
    \IF{(Frequency of $\langle \bm{k}^\star (\widetilde{\bm{v}}) -\bm{k}(\widetilde{\bm{v}}) \rangle_{\bm{h}} \not = \bm{0}) < \epsilon_1$}
    \STATE $c \leftarrow \frac{\Omega_{\bm{\epsilon}}}{\Omega- 2\Phi^{-1}(1-\epsilon_2) \sqrt{\bm{w}^\intercal \bm{\Sigma}\left(\bm{n}\right)\bm{w}}}$
    \STATE $c \leftarrow \max \left[c, \frac{\pi}{2\gamma m_\tau q(p-q)} \right]^\intercal$
    \IF {$c\sqrt{\bm{w}^\intercal \bm{\Sigma}\left(\bm{n}\right)\bm{w}} \leq \sigma$}
    \STATE $\textbf{venc} \leftarrow c\textbf{venc}$, $\sigma \leftarrow c\sqrt{\bm{w}^\intercal \bm{\Sigma}\left(\bm{n}\right)\bm{w}}$
    \ENDIF
    \ENDIF
    \ENDIF
    \STATE $q \leftarrow q+1$
    \ENDWHILE
    \STATE $p \leftarrow p+1$
    \ENDWHILE
    \STATE $m_{11} \leftarrow \frac{-\pi}{2\gamma \text{venc}_{31}}, m_{12} \leftarrow \frac{\pi}{2\gamma \text{venc}_{31}} - \frac{\pi}{\gamma \text{venc}_{32}}, m_{13} \leftarrow -m_{11}$
    \ENSURE \textbf{venc}, $[m_{11},m_{12},m_{13}]$   
\end{algorithmic}
\end{algorithm}
The output $[m_{11},m_{12},m_{13}]$ of Alg.~\ref{alg: PRoM Optimal Design for Three-Point Encoding} is a symmetric encoding that can be translated by $\pm m_{13}$ to yield a referenced encoding; however, the referenced encoding suffers an increased risk of severe intra-voxel dephasing, owing to the doubling of the largest first moment.

\subsection{Post-processing using Spatial Information: PRoM+}
In this subsection, we propose a simple but effective post-processing strategy paired with PRoM. The PRoM per-voxel estimation can benefit from leveraging spatial correlations among the per-voxel phase unwrapping integers. We assume that the noiseless velocity map $u(\bm{\rho})$ belongs to a surface class $\mathcal{U}$, where $\bm{\rho}$ denotes spatial position. For example, a polynomial model has been used for the brain image phase \cite{liang1996model}, and the Hagen–Poiseuille equation has been used for laminar blood flow throughout most of the circulatory system \cite{sutera1993history}.

When the model is accurate, the difference between the noisy unbiased estimated and true velocity map at each location should be at the noise level, and we assume at each location the difference follows i.i.d.\ normal distribution with variance $\frac{1}{2\lambda}$. 

Using \eqref{eq: negative log likelihood}, the spatial post-processing can be expressed as minimizing the negative log likelihood 
\begin{equation}
    \label{eq: joint processing using spatial information}
    \underset{u \in \mathcal{U}, \widehat{v}(\bm{\rho}) \in \{\widehat{v}_{\bm{k}}(\bm{\rho})\}}{\operatorname{argmin}} ~  \sum_{\bm{\rho}} \mathcal{L}(\widetilde{\bm{v}}(\bm{\rho}), \widehat{v}(\bm{\rho})) + \lambda (\widehat{v}(\bm{\rho}) - u(\bm{\rho}))^2
\end{equation}
Here, to avoid over-smoothness due to the regularization using $u \in \mathcal{U}$, we restrict $\widehat{v}(\bm{\rho}) \in \{\widehat{v}_{\bm{k}}|\bm{k} \in \mathcal{K}\left(\widetilde{\bm{v}}(\bm{\rho}) \right) \}$, i.e., we only allow spatial information to affect $\bm{k}$.

To optimize \eqref{eq: joint processing using spatial information}, we adopt an alternating minimization strategy. For current $\widehat{v}(\bm{\rho})$, we fit it with the best $u(\bm{\rho}) \in \mathcal{U}$ via surface fitting. For current $u (\bm{\rho})$, we update the choice of $\widehat{v}_{\bm{k}}(\bm{\rho})$ per voxel to minimize the cost. These two steps guarantee convergence in terms of the cost function. Iterations continue to convergence, which for the integer-valued $\bm{k}$ simply means no change; no convergence threshold is required, as would be with real-valued variables. Convergence is observed in two iterations in all experiments reported below. To reduce computation, we consider only a few most likely velocity candidates $\widehat{v}_{\bm{k}}(\bm{\rho})$. Moreover, we mask out air regions through magnitude thresholding to reduce computation.

The PRoM estimator, together with the spatial post-processing, is denoted ``PRoM+''. In the section below, we adopt for $\mathcal{U}$ a basic non-parametric local quadratic regression for both phantom and in vivo experiments.

\section{Methods}
\label{section: Methods}

\subsection{Simulation}
\label{subsection: Simulation}
To validate the two assumptions (\ref{eq: assumption 1}, \ref{eq: assumption 2}) used in PRoM,  we compare the RMSE for PRoM, the RMSE for grid search MLE, and the square root of the Cram\'{e}r-Rao lower bound (CRLB) \cite[p.~364]{StoicaMoses} derived from complex measurements in \eqref{eq: datamodel}.
Results are computed for $\textbf{venc} = [15, 6, 10]^\intercal$\,cm/s and $50\%$ intra-voxel dephasing of amplitudes for high first moments: $2s_{11} = s_{21} = 2s_{31}$. RMSE values for both the MLE from the complex measurements and PRoM are each calculated using $10^5$ trials, where the grid search of MLE on $v$ has spacing $0.006$\,cm/s to reduce bias from gridding.

To compare the performance of PRoM versus SDV and ODV, we process the same measurements using different estimators. Simulation parameters include: $\textbf{venc} = [15, 6, 10]^\intercal$\,cm/s, $[s_{11}, s_{21}, s_{31}] =[10, 20, 10]$, $N_c = 1$ coil. The ODV estimation algorithm uses $\widetilde{v}_{31}, \widetilde{v}_{32}$, while SDV uses $\widetilde{v}_{31},\widetilde{v}_{21}$. We calculate RMSE averaged over $10^5$ trials at each true $v$ on $[-30, 30]$\,cm/s sampled every $0.1$\,cm/s. 

To assess the optimized encoding design in Alg.~\ref{alg: PRoM Optimal Design for Three-Point Encoding},  we set a required velocity range of $[-150, 150]$\,cm/s and compare suggested choices of symmetric three-point encoding for each algorithm. Simulation parameters include: $N_c = 1$ coil, $s_{21} = 20$.  SDV suggest using $\textbf{venc} = [150, 60, 100]^\intercal$\,cm/s, ODV recommends $\xi = \frac{3}{2}$ and specify the three vencs to be $\textbf{venc} = [150, 50,75]^\intercal$\,cm/s. For PRoM, we assume intra-voxel dephasing leads to $[s_{11}, s_{21},s_{31}] = [10, 20, 10]$, other input includes $[P, Q] = [10,10], \bm{\epsilon} = [10^{-7}, 10^{-7}]^\intercal, \Omega_{\bm{\epsilon}} = 300$\,cm/s, $\gamma m_\tau = \frac{\pi}{50}$\,s/cm. The design procedure in Alg.~\ref{alg: PRoM Optimal Design for Three-Point Encoding} gives $\textbf{venc} = c [30,5,6]^\intercal$\,cm/s for scaling $c=5.1242$, yielding
$\textbf{venc} = [153.73, 25.62,30.75]^\intercal$\,cm/s. 
Because PRoM uses a 95.2\% larger 
$m_{13}$ thereby potentially leading to more intra-voxel dephasing, we advantage ODV and SDV by assuming no intra-voxel dephasing: $[s_{11},s_{21},s_{31}] = [20,20,20]$. We calculate RMSE averaged over $10^5$ trials at each true $v$ on $[-150, 150]$\,cm/s sampled every $0.5$\,cm/s. 

To simulate the complex intra-voxel dephasing and assess per-voxel estimator performance in this case, we simulate vessels as in \cite{loecher2018velocity} with circularly symmetric parabolic velocity profiles. Parameters include: $0.1$\, mm$^3$ isotropic resolution and $N_c = 1$ coil. The five vessels share the same peak velocity $60$\,cm/s but have different diameters of $5.5, 3.9, 3.2, 2.7, 2.4$\,mm. The proton density is set to be $30\%$ in the background region and $50\%$ in the static tissue region compared to that in the vessel regions. The complex signal is generated using \eqref{eq: datamodel} with symmetric three-point encoding such that $\textbf{venc} = [60, 20, 30]^\intercal$\,cm/s. Regions of $5\times 5 \times 5$ voxels are merged into one to generate intra-voxel dephasing and $0.5$\, mm$^3$ isotropic resolution. Then we add i.i.d.\ white complex Gaussian noise to make the maximum $s_{\alpha \beta}$ for all voxels in the vessel regions reach $30$. No post-processing is adopted for PRoM for pure comparison of per-voxel estimation performance in various dephasing scenarios.

\subsection{Phantom}
\label{subsection: Phantom}
A phantom experiment allows for a controlled comparison of estimation performance for SDV, ODV, PRoM, and PRoM+. An agarose gel-filled cylindrical container is used to generate the MRI signal. An air-coupled propeller rotates the container. The rotational rate is counted with a photomicrosensor~\cite{vali2020development}. The phantom was scanned on a $1.5$T scanner (Siemens MAGNETOM Avanto). The in-plane bottom to top velocity increases linearly with the horizontal component of distance from the center of the container. In this experiment, we encoded the vertical component of the in-plane velocity, which ranges from $-240$ to $240$\,cm/s, using symmetric a three-point acquisition. The acquisition parameters include: $N_c = 16$ coils; $\textbf{venc} = \left[250, 100, \frac{500}{3} \right]^\intercal$\,cm/s, following the recommended choice of venc ratio adopted in \cite{Schnell2017, Carrillo2019}; field-of-view (FOV) $520 \times 260$ mm; flip angle $5^\circ$; TR $4.38$ ms; TE $2.66$ ms; and, matrix size $192 \times 125$. There are $15$ repeated acquisitions. We use the averaged k-space as a reference to calculate the RMSE and aliasing error for all voxels except the background region across $15$ scans. For per-frame post-processing of PRoM, voxels with less than $30\%$ maximum voxel magnitude were masked out, and only the two most likely $v_{\bm{k}}$ were considered. Locally weighted quadratic surface class $\mathcal{U}$ using a span of 25\% closest points was adopted with $\lambda = 1$.

\subsection{In Vivo}
\label{subsection: In Vivo}
An in vivo experiment is used to verify that PRoM can unwrap velocity on an interval $\Omega$ larger than twice the largest venc, $\text{venc}_{21}$, as claimed in \eqref{eq: range vs largest venc} and to illustrate improved performance of PRoM+ using spatial information. A healthy volunteer was scanned on a $3$T scanner (Siemens MAGNETOM Vida). For the recruitment and consent of human subject used in this study, the ethical approval was given by an Internal Review Board (2005H0124) at The Ohio State University. The venc scouting scan showed the maximum absolute value of velocity to be above $90$\,cm/s and less than $100$\,cm/s. A breath-held, $N_c = 30$ coils, symmetric three-point encoding PC-MRI dataset was collected, with the imaging plane intersecting both the ascending aorta and descending aorta; through-plane velocity was encoded. Other acquisition parameters include: FOV $360 \times 270$ mm; flip angle $15^\circ$; TR $5.56$ ms; TE $3.69$ ms; matrix size $192 \times 108$; and, cardiac phases, $13$. The three-point acquisition is designed
using Alg.~\ref{alg: PRoM Optimal Design for Three-Point Encoding} for: $[P, Q] = [10,10], [s_{1\beta}, s_{2\beta},s_{3\beta}] = [5, 10, 5], \bm{\epsilon} = [10^{-6}, 10^{-6}]^\intercal, \Omega_{\bm{\epsilon}} = 200$\,cm/s, $\gamma m_\tau = \frac{\pi}{40}$\,s/cm. The design results in $\xi=5/3$
with highest first moment $\frac{\pi}{\gamma m_{13}} = 42$\,cm/s.
Restricted by input precision of the scanner interface, Alg.~2 gives
$\textbf{venc} = c [15,6,10]^\intercal$ for scaling $c=\tfrac{7}{2}$\,cm/s, yielding
$\textbf{venc} = \left[52.5, 21, 35\right]^\intercal$~cm/s.
The resulting unambiguous range is $\Omega = 210$\,cm/s, which is double the range $[-52.5, 52.5)$\,cm/s of SDV processing. For per-frame post-processing of PRoM, after square-root sum-of-squared coil combination, voxels less than $30\%$ maximum image were masked out, and only the two most likely $v_{\bm{k}}$ values were considered. Due to the complex velocity map across the FOV, locally weighted quadratic surface class $\mathcal{U}$ was adopted using a span of $3\%$ closest points and $\lambda = 1$.

\section{Results}
\subsection{Simulation Results}

\label{section: Results}
Fig.~\ref{fig: RMSE vs CRLB}  numerically explores the approximations adopted in PRoM.
The square root of the CRLB is plotted versus $s_{21}$ for the model in \eqref{eq: datamodel} and provides a lower bound on the RMSE for any unbiased estimator. The bound is from a local analysis of the likelihood function, and thus optimistically does not consider unwrapping errors. Superimposed are the RMSE values for both the MLE from the complex measurements and PRoM. Both estimators diverge from the bound for low SNR due to unwrapping errors. Both the MLE and PRoM asymptotically coincide with $\text{CRLB}^{0.5}$. Moreover, throughout the entire range of noise powers considered, the RMSE difference between MLE and PRoM is negligible, hence justifying the characterization of PRoM as an approximate MLE and validating the assumptions adopted in the derivation of the PRoM estimator.
\begin{figure}[h!]
    \centering
    \includegraphics[width = \columnwidth]{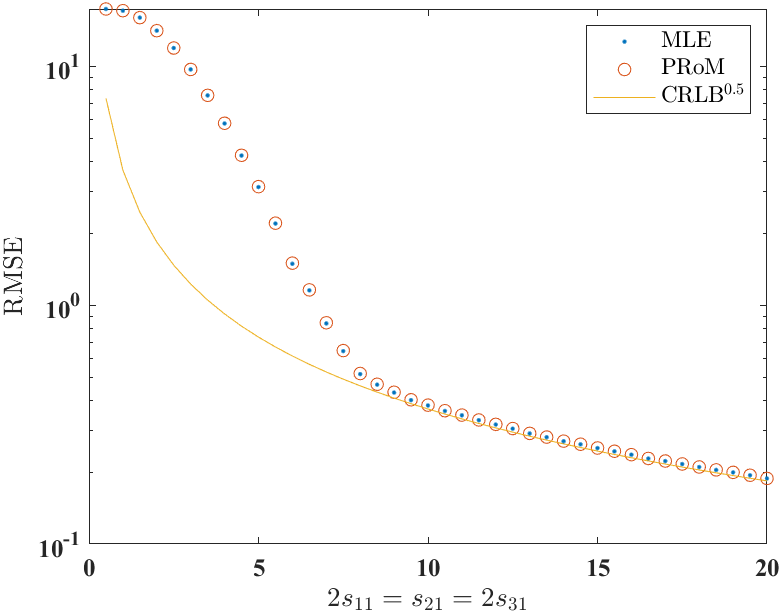}
    \caption{Comparison of PRoM, the MLE directly from the complex-valued voxels, and $\text{CRLB}^{0.5}$. RMSE is graphed versus $2s_{11} = s_{21} = 2s_{31}$; RMSE values for MLE and PRoM are averaged from $10^5$ trials.}
    \label{fig: RMSE vs CRLB}
\end{figure}

Fig.~\ref{fig: rmse_vs_range} shows the RMSE results for the same acquisition with $10^5$ trials at each true velocity value with simulation grid spacing $0.1$\,cm/s. Both ODV and PRoM can unwrap a large range of velocities. The bottom panel zooms into a smaller range of RMSE values; from a per-voxel estimation perspective, PRoM improves RMSE by modeling the non-zero noise correlation between phase difference measurements.

\begin{figure}[h!]
    \centering
    \includegraphics[width = \columnwidth]{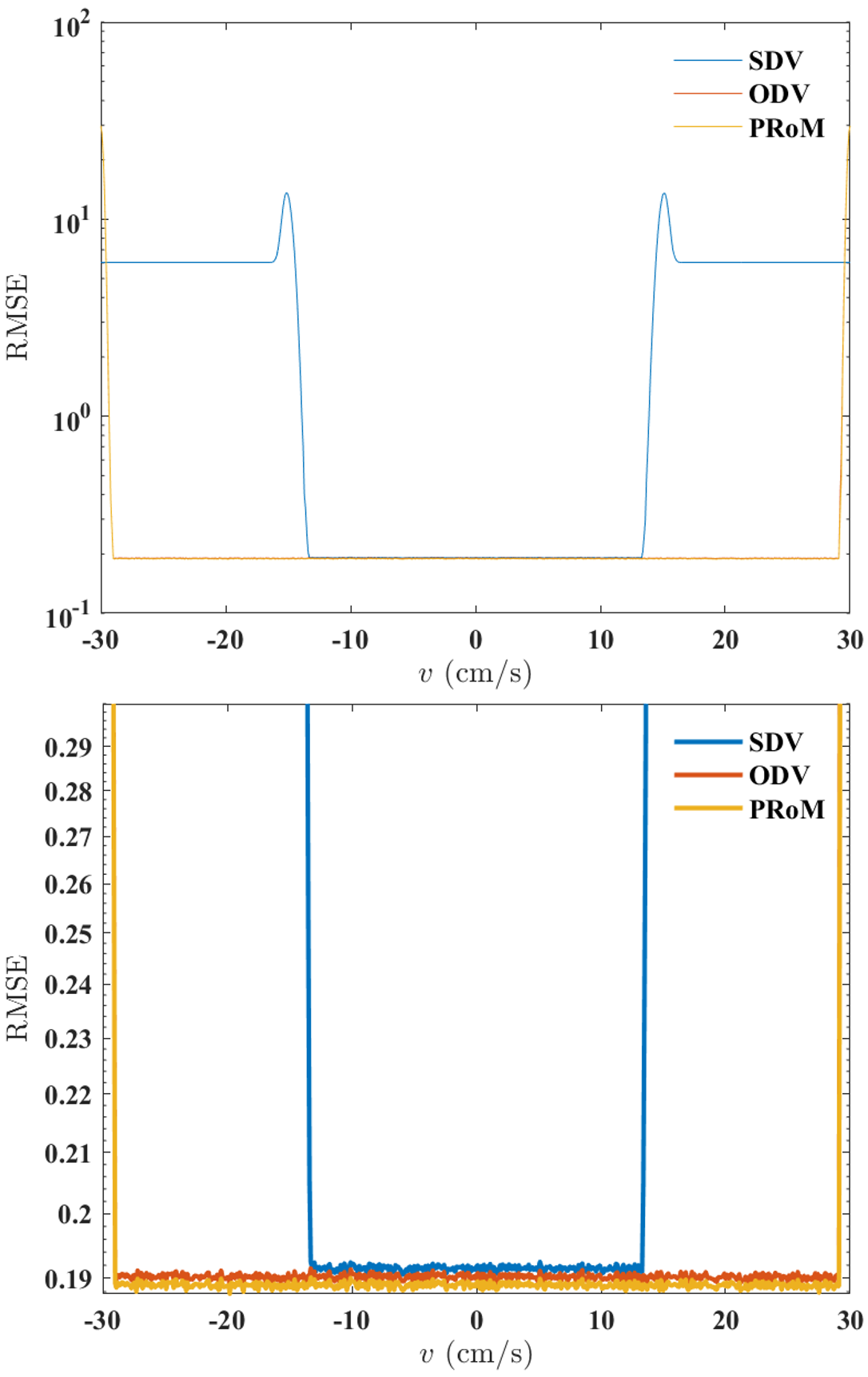}
    \caption{Top: RMSE versus true $v$ for $\textbf{venc} = [15, 6, 10]^\intercal$\,cm/s and $[s_{11}, s_{21}, s_{31}] = [10, 20, 10]$. RMSE values are averaged over $10^5$ trials. Bottom: zoomed-in version.}
    \label{fig: rmse_vs_range}
\end{figure}

 Fig.~\ref{fig: design comparision} shows the RMSE results for $10^5$ trials at each true velocity value with grid $0.5$\,cm/s. Here we advantage ODV and SDV by assuming no intra-voxel dephasing for their acquisition. For PRoM we assume intra-voxel dephasing leads to $2s_{11} = s_{21} = 2s_{31}$. The PRoM design uses $\xi=6/5$, which explicitly suppresses both Unwrapping Errors and Aliasing Errors to ensure reliable estimation across the full range, $[-150,150]$\,cm/s. Further, despite the handicap of simulated $50\%$ intra-voxel dephasing, PRoM still provides a 10.5\% decrease in RMSE compared to ODV and a 25.1\% decrease than SDV used with their suggested acquisition strategy.

\begin{figure}[h!]
    \centering
    \includegraphics[width = \columnwidth]{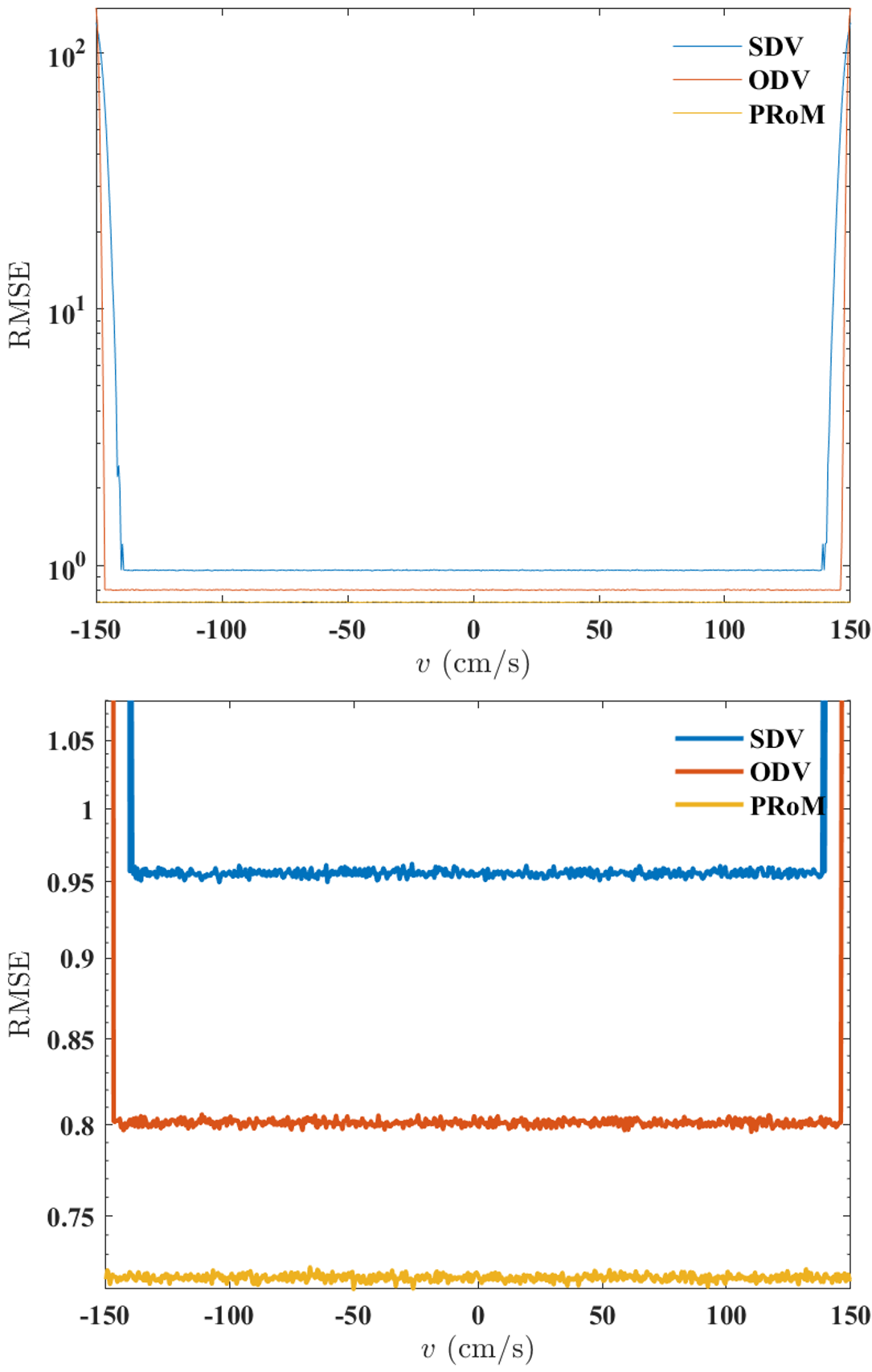}
    \caption{Comparison of acquisition design and estimation for SDV, ODV, and PRoM:  RMSE versus true velocity. Desired velocity range to be estimated is $[-150, 150]$\,cm/s. SDV uses three vencs generated are $\textbf{venc} = [150,60,100]^{\intercal}$\,cm/s. ODV uses three vencs are $\textbf{venc} = [150,50,75]^{\intercal}$\,cm/s. PRoM uses $\textbf{venc} = [153.73, 25.62,30.75]^\intercal$\,cm/s. Here we assume no intra-voxel dephasing for ODV and SDV $[s_{11}, s_{21}, s_{31}] = [20, 20, 20]$. For PRoM we assume intra-voxel dephasing and $[s_{11}, s_{21}, s_{31}] = [10, 20, 10]$. Top: RMSE versus true $v$.  RMSE values are averaged over $10^5$ trials. Bottom: zoomed-in version.}
    \label{fig: design comparision}
\end{figure}

Figure.~\ref{fig: Simulation of vessels} shows the per voxel results for simulation of intra-voxel dephasing. Panel (a) is the velocity profile simulated with refined resolution; (b) is the lower acquired resolution which leads to intra-voxel dephasing. Panels (c), (d), (e) illustrate intra-voxel dephasing for $m_{11},m_{12},m_{13}$. We observe more serious intra-voxel dephasing in $m_{11}, m_{13}$ and at voxels close to the boundaries of the simulated vessels. Panels (f), (g), (h) show the recovered velocity. For the flow region, we can observe aliasing error in the SDV estimated velocity, but not in ODV and PRoM. The RMSE values for SDV, ODV, and PRoM are $120.27, 10.24$, and  $10.12$\,cm/s, respectively. Moreover, the RMSE values given no Aliasing Error for SDV, ODV, and PRoM are $10.23, 10.24$, and  $10.12$\,cm/s. 
\begin{figure}[h!]
    \centering
    \includegraphics[width = \columnwidth]{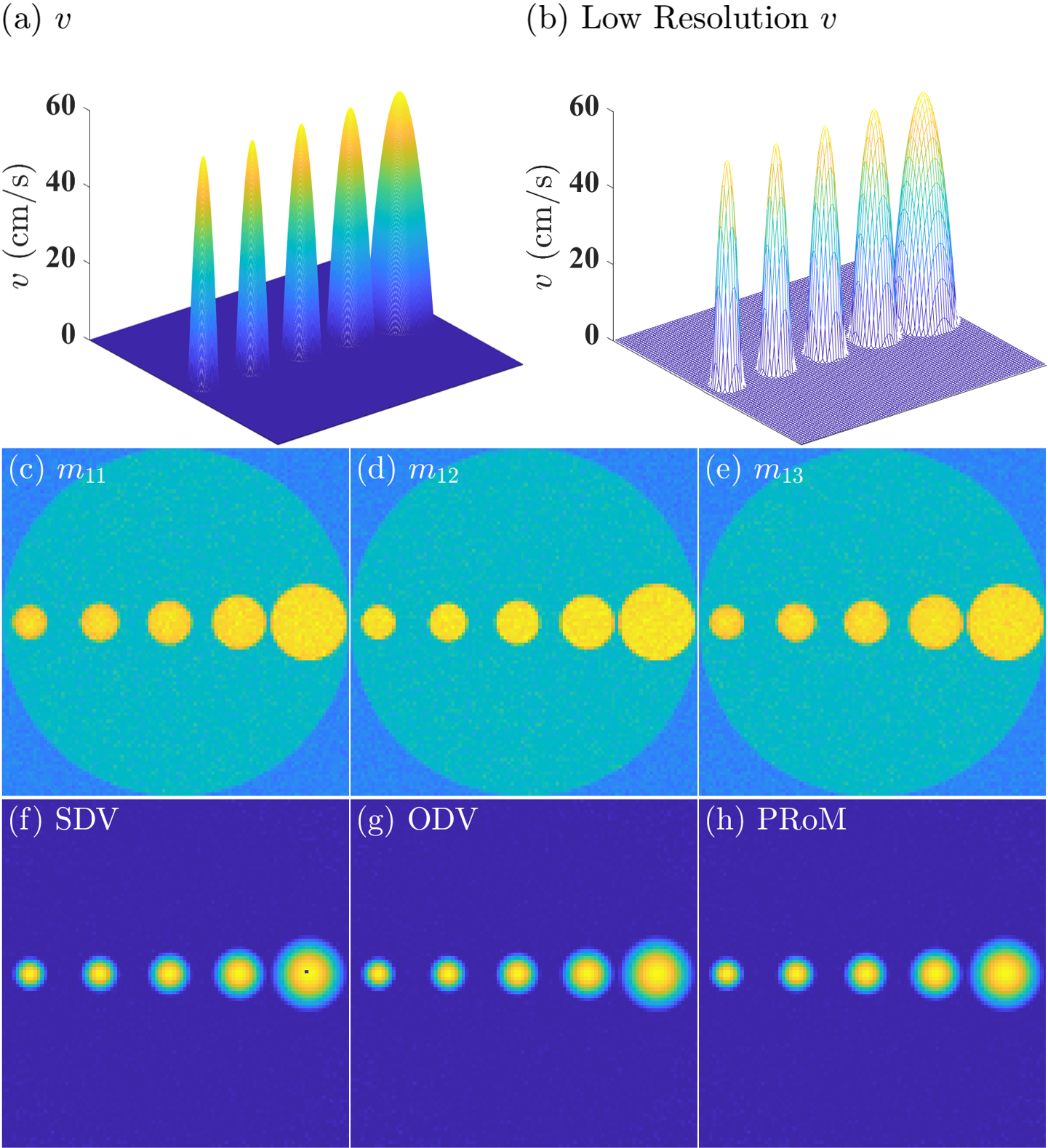}
    \caption{Comparison of SDV, ODV, and PRoM for simulated vessels. $\textbf{venc} = [60,20,30]^\intercal$\,cm/s. (a) is the true velocity profile. (b) is the lower acquisition resolution velocity profile. (c), (d), (e) are the magnitude images for $m_{11}, m_{12}, m_{13}$. (f), (g), (h) are the recovered velocities for SDV, ODV and PRoM.}
    \label{fig: Simulation of vessels}
\end{figure}

\subsection{Phantom Results}
\textcolor{black}{Fig.~\ref{fig: Phantom} uses measured data from a spinning phantom to evaluate RMSE in velocity estimation. In addition, the phantom data validate that both ODV and PRoM paired with simple post-processing can reliably unwrap a larger range of velocities than SDV, given the same encodings. The number of aliased voxels and RMSE values are reported in Table~\ref{tab: phantom}. The PRoM+ iteration is observed for all frames to converge in only two iterations. In this instance, PRoM+ eliminates aliasing errors and reduces RMSE by $25.8\%$ versus ODV, and $48.5\%$ compared to SDV.}
\begin{table}[h!]
\begin{center}
\caption{Comparison, for 1.5\,T phantom data, of estimator root mean squared error (RMSE) for SDV, ODV, PRoM, and PRoM+.}
\begin{tabular}{| l | r|r|r|r| }
    \hline
    Metric & SDV & ODV & PRoM & PRoM+ \\ \hline
    Number of aliased voxels & 27 &  5 &  5 & 0 \\ 
    RMSE of all voxels (cm/s) & 6.08 & 4.22& 3.85&  3.13\\ 
    RMSE excl. aliased voxels (cm/s) & 3.22& 3.59& 3.13&  3.13\\\hline
\end{tabular}
\label{tab: phantom}
\end{center}
\end{table}
\begin{figure}[h!]
    \centering
    \includegraphics[width = \columnwidth]{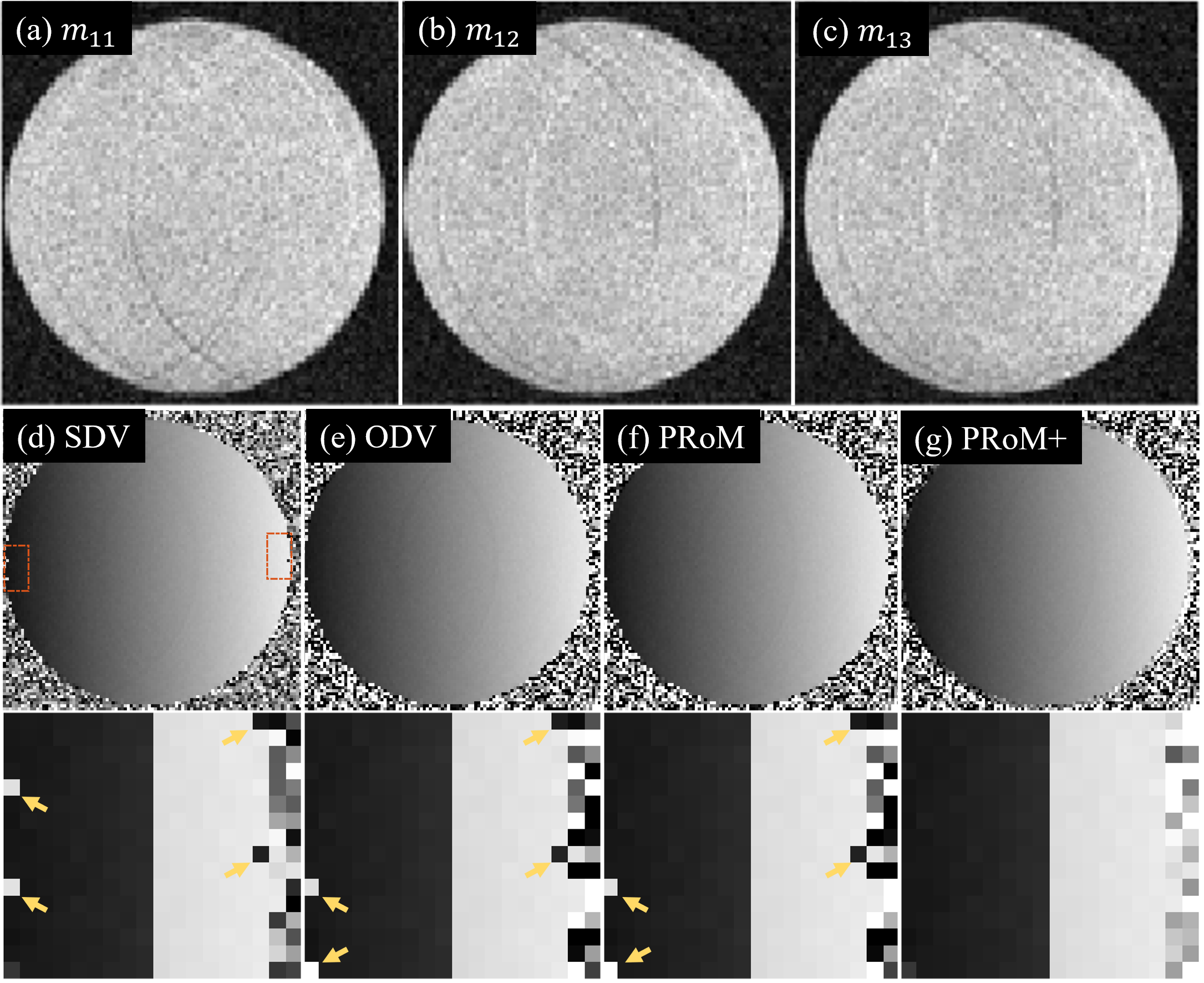}
    \caption{Comparison of estimators for SDV, ODV, PRoM, and PRoM+ using locally weighted quadratic surface class $\mathcal{U}$. $\textbf{venc} = \left[250, 100, \frac{500}{3}\right]^\intercal$\,\,cm/s. The in-plane speed is within $\pm 240$\,cm/s. (a), (b), (c) are the square root of sum of squared coil images for $m_{11},m_{12},m_{13}$. (d), (e), (f), (g) are the velocity estimates from SDV, ODV, PRoM, and PRoM+. Bottom row are the zoomed-in versions of (d), (e), (f), (g).}
    \label{fig: Phantom}
\end{figure}

\subsection{In Vivo Results}
\textcolor{black}{From Fig.~\ref{fig: In VIVO} panels (a), (b) and (c),  we can observe more serious intra-voxel dephasing for $m_{11}, m_{13}$ compared to $m_{12}$. Because the largest absolute value of true velocity is larger than the largest venc $52.5$\,cm/s, we observe significant aliasing in SDV recovery from (d). However, (e) and (f) illustrate that both ODV and PRoM can recover velocities larger than the largest $\text{venc}$. 
Here, the acquisition designed for PRoM departs from the $\xi=3/2$ heuristic to use $\xi=5/3$, resulting in an unambiguous range of velocities four times that for standard dual-venc for the same highest venc. (g) illustrates that PRoM+ can incorporate spatial correlations to improve unwrapping performance.}

\textcolor{black}{For the in vivo example using coil-combined image: ODV computation takes 9.106 seconds, and PRoM takes 2.246 seconds, with an additional 1.717 seconds for PRoM+. The iteration of PRoM+ is observed to converge in only two iterations for all frames.}

\begin{figure}[h!]
    \centering
    \includegraphics[width = \columnwidth]{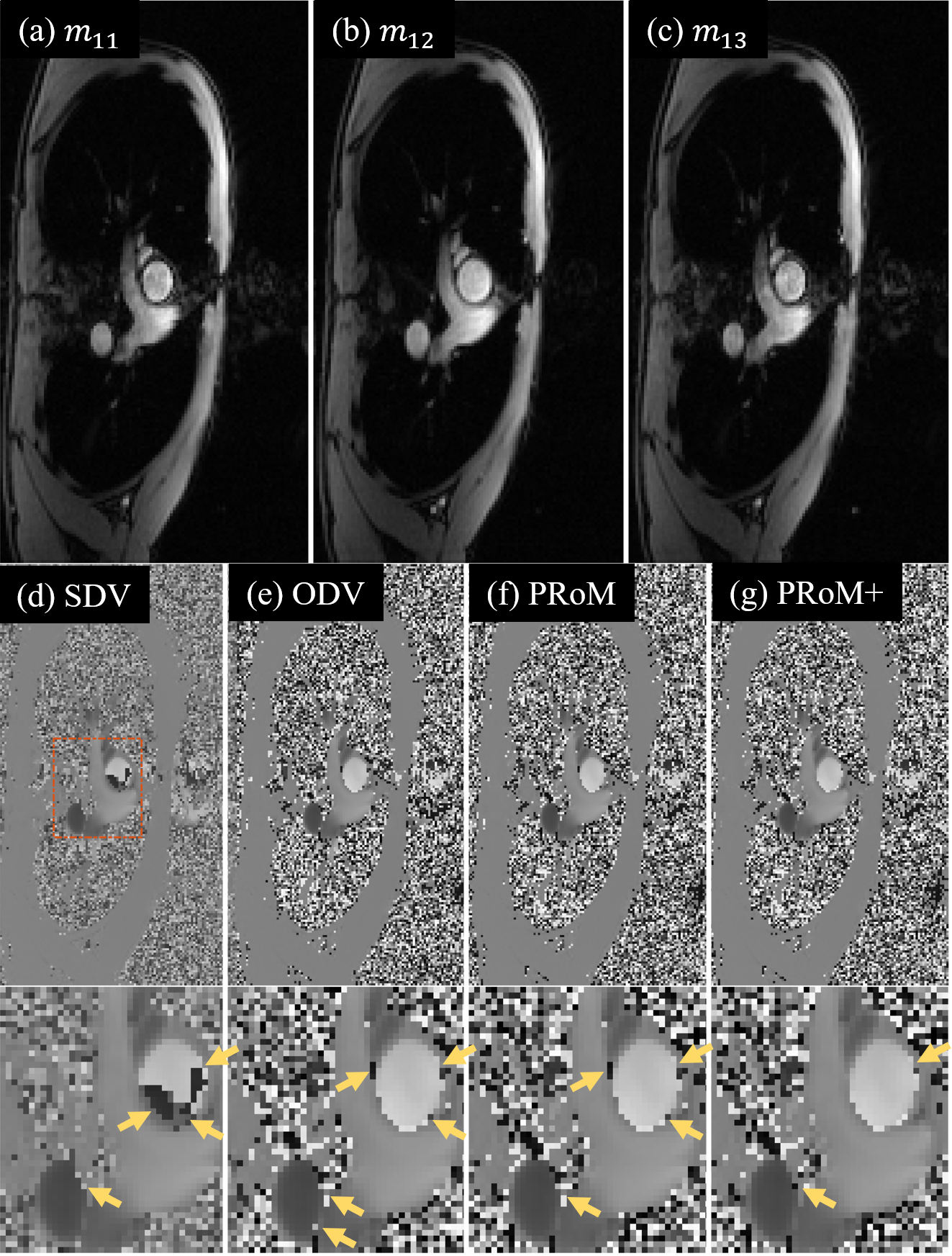}
    \caption{Comparison for SDV, ODV, PRoM, and PRoM+ using locally weighted quadratic surface class $\mathcal{U}$ velocity estimation for $\textbf{venc} = \left[52.5, 21, 35\right]^\intercal$\,cm/s. (a), (b), (c) are the square root of sum of squared coil images for $m_{11},m_{12},m_{13}$. (d), (e), (f), (g) are the velocity estimates from SDV, ODV, PRoM, and PRoM+. Bottom row are the zoomed-in versions of (d), (e), (f), (g).}
    \label{fig: In VIVO}
\end{figure}

\section{Discussion}
\label{section: Discussion}
\textcolor{black}{For the simulation and phantom studies, where the ground truth is available, PRoM offers a significant RMSE advantage over standard dual venc processing, i.e., 25.1\% for the simulation study and 48.5\% for the phantom study.  Although PRoM offers a fourfold computation advantage over ODV, its RMSE advantage over ODV, when both methods use the same venc values, is marginal (Fig.~\ref{fig: rmse_vs_range}). However, there are two features that distinguish PRoM from ODV, NCO, and other dual-venc methods. First, PRoM allows for an optimized venc design, which can translate to a more significant reduction in RMSE, as evident by 10.5\% reduction in RMSE over ODV (Fig.~\ref{fig: design comparision}). Second, PRoM can leverage the conditional probabilities of different wrapping integers to enable a new mechanism for phase unwrapping.} 

\textcolor{black}{The presentation here for PRoM is limited to a single component of velocity; extension to the estimation of all three velocity components, and hence congruence equations in multiple variables, is considered in \cite{zhao2022maximizing}.} 


\textcolor{black}{Several three-point encoding \cite{Carrillo2019, ma2020efficient} have been proposed and validated for PC-MRI aiming to improve VNR or unambiguous velocity range. Although performance depends strongly on vencs, the selection of vencs has been based on heuristics. 
PRoM, for the first time, provides an avenue to optimize vencs for three-point encoding. Fig.~\ref{fig: In VIVO} demonstrates that the PRoM-inspired design procedure in Alg.~\ref{alg: PRoM Optimal Design for Three-Point Encoding} can provide an unambiguous velocity range more than four times as large as the highest venc. The acquisition in Fig.~\ref{fig: In VIVO} illustrates the derivation in \eqref{eq: range vs largest venc} and the associated design procedure in Alg.~\ref{alg: PRoM Optimal Design for Three-Point Encoding}. Indeed, the design procedure allows the range to grow to the greatest extent allowed by the presumed noise floor, which is given as an input to the design. The design then minimizes the predicted RMSE while meeting constraints on unwrapping errors, reliable range of velocities, and maximum first moment. In the regime of low SNR, the PRoM design yields $\xi=3/2$, coinciding with a conventional heuristic \cite{Carrillo2019}. The Alg.~\ref{alg: PRoM Optimal Design for Three-Point Encoding} design provides unwrapping and velocity range guarantees, given a noise floor and bound on the highest first moment. For any higher SNR encountered, the guarantee still holds, and the RMSE reduces according to \eqref{eq: rmse given k}.}

\textcolor{black}{For volumetric imaging applications with vast numbers of voxels, the processing speed of PRoM may provide a desirable benefit. The careful pruning of the set of candidate wrapping integers results in a fast estimator without expensive grid search or gradient-based iterative optimization. }

\textcolor{black}{PRoM+ provides a simple but effective post-processing strategy for PRoM, which only affects the choice of $\bm{k}^\star$ from a Bayesian perspective. It differs from conventional unwrapping algorithms that typically only allow $\pm 2\pi$ adjustment in the possibly wrapped phases \cite{itoh1982analysis}. And, the processing incorporates the relative conditional probabilities of different wrapping integers, which are available as a byproduct of the PRoM algorithm. This strategy can be easily generalized to multiple velocity components and high-dimensional imaging. Although PRoM+ includes both covariance computation and spatial post-processing, the total computation time of PRoM+ for the in vivo example nonetheless is less than one-half the computation time of ODV, thereby enabling advanced processing in the clinical workflow.}

\textcolor{black}{Due to the fast computation speed, ability to incorporate constraints (e.g., bounds on the largest first moment strength), and ability to better process complex-valued multi-coil MRI data, PRoM can be readily integrated into the clinical workflow for (i) patient-specific, optimal venc design (Alg. \ref{alg: PRoM Optimal Design for Three-Point Encoding}), (ii) joint processing of the multi-point acquisition (Alg. \ref{alg: PRoM}), and (iii) enhanced phase unwrapping using PRoM+. The RMSE advantage of PRoM can enable more accurate quantification of slow flow for investigating conditions such as dilated aorta \cite{schnell2016improved} or left atrial blood stasis \cite{nakaza2021dual}. Moreover, the augmented phase unwrapping capabilities provided by PRoM+ improve recovery of peak velocities for in vivo data, where the voxels may have severe intra-voxel dephasing.}

\section{Conclusion}
\label{section: Conclusion}
\textcolor{black}{In this work, we propose PRoM, an algorithm to solve a noisy set of linear congruence equations. We apply PRoM to single dimension velocity recovery in multi-coil phase-contrast MRI, presenting results for three-point acquisition. PRoM provides a fast approximate maximum likelihood estimator that fully leverages all pairwise phase differences while seamlessly accommodating coil combining and amplitude attenuation due to dephasing. PRoM can recover velocities across the full unambiguous range, which can be much larger than twice the highest venc. Through PRoM, we can directly compute the probabilities of unwrapping errors and formulate the velocity estimate's probability distribution. This innovation, in turn, allows for the optimized design of the phase-encoded acquisition, guaranteeing minimum estimation error subject to user-defined constraints on desired velocity range, unwrapping errors, aliasing errors, and maximum first moment of the encoding gradient. Moreover, the wrapping error probabilities enable a simple but effective post-processing strategy for incorporating spatial correlations to further mitigate unwrapping errors. The processing does not require prior knowledge of sensitivity maps, dephasing, or per-voxel SNR; instead, an auto-tuning is provided by constructing a phase difference covariance matrix from the images across all encodings and coils. Simulation, phantom, and in vivo results validate the benefits of fast computation, reduced estimation error, increased unambiguous velocity range, and optimized acquisition.}


\section{Acknowledgment}
\label{section: Acknowledgement}
\textcolor{black}{The authors thank Dr.\ Ning Jin and Siemens Healthineers for providing the phantom data and Dr.\ Yingmin Liu and Dr.\ Chong Chen for their assistance with pulse sequence modification and data acquisition. The authors also thank Sizhuo Liu for her valuable discussion about post-processing.}

\newpage
\bibliographystyle{IEEEtran}
\textcolor{black}{\bibliography{PRoM}}

\begin{thebibliography}{10}
\providecommand{\url}[1]{#1}
\csname url@samestyle\endcsname
\providecommand{\newblock}{\relax}
\providecommand{\bibinfo}[2]{#2}
\providecommand{\BIBentrySTDinterwordspacing}{\spaceskip=0pt\relax}
\providecommand{\BIBentryALTinterwordstretchfactor}{4}
\providecommand{\BIBentryALTinterwordspacing}{\spaceskip=\fontdimen2\font plus
\BIBentryALTinterwordstretchfactor\fontdimen3\font minus
  \fontdimen4\font\relax}
\providecommand{\BIBforeignlanguage}[2]{{%
\expandafter\ifx\csname l@#1\endcsname\relax
\typeout{** WARNING: IEEEtran.bst: No hyphenation pattern has been}%
\typeout{** loaded for the language `#1'. Using the pattern for}%
\typeout{** the default language instead.}%
\else
\language=\csname l@#1\endcsname
\fi
#2}}
\providecommand{\BIBdecl}{\relax}
\BIBdecl

\bibitem{pelc1991phase}
N.~J. Pelc, R.~J. Herfkens, A.~Shimakawa, D.~R. Enzmann \emph{et~al.}, ``Phase
  contrast cine magnetic resonance imaging,'' \emph{Mag.\ Reson.\ Quarterly},
  vol.~7, no.~4, pp. 229--254, 1991.

\bibitem{markl20124d}
M.~Markl, A.~Frydrychowicz, S.~Kozerke, M.~Hope, and O.~Wieben, ``4d flow
  {MRI},'' \emph{J.\ Magn.\ Reson.\ Imaging}, vol.~36, no.~5, pp. 1015--1036,
  2012.

\bibitem{Lee1995}
A.~T. Lee, G.~B. Pike, and N.~Pelc, ``Three‐point phase‐contrast velocity
  measurements with increased velocity‐to‐noise ratio,'' \emph{Magn.\
  Reson.\ Med.}, vol.~33, pp. 122--126, 1995.

\bibitem{Schnell2017}
S.~Schnell~\textit{et al.}, ``Accelerated dual-venc 4{D} flow {MRI} for
  neurovascular applications,'' \emph{J.\ Magn.\ Reson.\ Imaging}, vol.~46,
  no.~1, pp. 102--114, 2017.

\bibitem{Zhao2018}
S.~Zhao, L.~C. Potter, N.~Jin, Y.~Liu, O.~P. Simonetti, and R.~Ahmad,
  ``{PC-MRI} with phase recovery from multiple wrapped measurements {(PRoM)},''
  in \emph{Proc.\ 26th ISMRM Meeting \& Exhibition}, Paris, France, 2018, p.
  0685.

\bibitem{loecher2018velocity}
M.~Loecher and D.~B. Ennis, ``Velocity reconstruction with nonconvex
  optimization for low-velocity-encoding phase-contrast {MRI},'' \emph{Magn.\
  Reson.\ Med.}, vol.~80, no.~1, pp. 42--52, 2018.

\bibitem{Carrillo2019}
H.~Carrillo, A.~Osses, S.~Uribe, and C.~Bertoglio, ``Optimal dual-venc
  unwrapping in phase-contrast {MRI},'' \emph{IEEE Trans.\ Med.\ Imag.},
  vol.~38, no.~5, pp. 1263--1270, 2019.

\bibitem{wang2015maximum}
W.~Wang, X.~Li, W.~Wang, and X.-G. Xia, ``Maximum likelihood estimation based
  robust {C}hinese remainder theorem for real numbers and its fast algorithm,''
  \emph{IEEE Trans.\ Signal Process.}, vol.~63, no.~13, pp. 3317--3331, 2015.

\bibitem{WangNehorai2017}
M.~Wang and A.~Nehorai, ``Coarrays, {MUSIC}, and the {C}ram\'{e}r–{R}ao
  bound,'' \emph{IEEE Trans.\ Signal Process.}, vol.~65, no.~4, pp. 933--946,
  2017.

\bibitem{o2008mri}
K.~R. O'Brien, B.~R. Cowan, M.~Jain, R.~A. Stewart, A.~J. Kerr, and A.~A.
  Young, ``{MRI} phase contrast velocity and flow errors in turbulent stenotic
  jets,'' \emph{J.\ Magn.\ Reson.\ Imaging}, vol.~28, no.~1, pp. 210--218,
  2008.

\bibitem{StoicaMoses}
P.~Stoica and R.~Moses, \emph{Spectral Analysis of Signals}.\hskip 1em plus
  0.5em minus 0.4em\relax Upper Saddle River, NJ: Prentice-Hall, 2005.

\bibitem{van1996matrix}
C.~F. Van~Loan and G.~Golub, \emph{Matrix Computations}.\hskip 1em plus 0.5em
  minus 0.4em\relax Johns Hopkins University Press, 1989.

\bibitem{ODonoughue2012}
N.~O'Donoughue and J.~M. Moura, ``On the product of independent complex
  {G}aussians,'' \emph{IEEE Trans.\ Signal Process.}, vol.~60, no.~3, pp.
  1050--1063, 2012.

\bibitem{halmos1972positive}
P.~R. Halmos, ``Positive approximants of operators,'' \emph{Indiana University
  Mathematics Journal}, vol.~21, no.~10, pp. 951--960, 1972.

\bibitem{Xia2015}
W.~Wang, X.~Li, W.~Wang, and X.-G. Xia, ``Maximum likelihood estimation based
  robust {C}hinese remainder theorem for real numbers and its fast algorithm,''
  \emph{IEEE Trans.\ Signal Process.}, vol.~63, no.~13, pp. 3317--3331, 2015.

\bibitem{jeruchim1984techniques}
M.~Jeruchim, ``Techniques for estimating the bit error rate in the simulation
  of digital communication systems,'' \emph{{IEEE} J. Sel. Areas Commun.},
  vol.~2, no.~1, pp. 153--170, 1984.

\bibitem{liang1996model}
Z.-P. Liang, ``A model-based method for phase unwrapping,'' \emph{IEEE Trans.\
  Med.\ Imag.}, vol.~15, no.~6, pp. 893--897, 1996.

\bibitem{sutera1993history}
S.~P. Sutera and R.~Skalak, ``The history of {P}oiseuille's law,'' \emph{Annu.\
  Rev.\ Fluid Mech.}, vol.~25, no.~1, pp. 1--20, 1993.

\bibitem{vali2020development}
A.~Vali, S.~Schmitter, L.~Ma, S.~Flassbeck, S.~Schmidt, M.~Markl, and
  S.~Schnell, ``Development of a rotation phantom for phase contrast {MRI}
  sequence validation and quality control,'' \emph{Magn.\ Reson.\ Med.},
  vol.~84, no.~6, pp. 3333--3341, 2020.

\bibitem{zhao2022maximizing}
S.~Zhao, R.~Ahmad, and L.~C. Potter, ``Maximizing unambiguous velocity range in
  phase-contrast {MRI} with multipoint encoding,'' in \emph{Proc.\ 19th Int.\
  Symp.\ Biomed.\ Imaging (ISBI)}.\hskip 1em plus 0.5em minus 0.4em\relax IEEE,
  2022, pp. 1--5.

\bibitem{ma2020efficient}
S.~Ma, Liliana E~\textit{et al.}, ``Efficient triple-{VENC} phase-contrast
  {MRI} for improved velocity dynamic range,'' \emph{Magn. Reson. Med.},
  vol.~83, no.~2, pp. 505--520, 2020.

\bibitem{itoh1982analysis}
K.~Itoh, ``Analysis of the phase unwrapping algorithm,'' \emph{Appl.\ Opt.},
  vol.~21, no.~14, pp. 2470--2470, 1982.

\bibitem{schnell2016improved}
S.~Schnell~\textit{et al.}, ``Improved assessment of aortic hemodynamics by kt
  accelerated dual-{VENC} 4{D} flow {MRI} in pediatric patients,'' \emph{J.
  Cardiovasc. Magn. Reson.}, vol.~18, no.~1, pp. 1--2, 2016.

\bibitem{nakaza2021dual}
S.~Nakaza~\textit{et al.}, ``Dual-venc 4{D} flow {MRI} can detect abnormal
  blood flow in the left atrium that potentially causes thrombosis formation
  after left upper lobectomy,'' \emph{Magn. Reson. Med. Sci.}, pp. mp--2020,
  2021.

\end{thebibliography}
\end{document}